
\documentclass[letterpaper, 10 pt, conference]{ieeeconf} 
\pdfoutput=1

\IEEEoverridecommandlockouts                              
\overrideIEEEmargins



\author{Bruce Lee$^1$ and Peter Seiler$^2$%
\thanks{$^1$ B. Lee is an undergraduate student at the University of
            Minnesota, Minneaolis, USA
                 \href{mailto:leex8370@umn.edu}{leex8370@umn.edu}}
\thanks{
        $^2$ P. Seiler is with Faculty of Electrical Engineering, 
            University of Michigan, Ann Arbor, USA 
                    \href{mailto:pseiler@umich.edu}{pseiler@umich.edu}
}
}

\title{\LARGE \bf
  Finite Step Performance of First-order Methods Using
  Interpolation Conditions Without Function Evaluations              
    }
\usepackage{hyperref}
\usepackage{xcolor}
\usepackage{mathtools}
\usepackage{amssymb}
\usepackage[overload]{empheq}

\usepackage{amsthm}
\newtheorem{theorem}{\textbf{Theorem}}

\newtheorem{lemma}{\textbf{Lemma}}

\newtheorem{definition}{Definition}

\newtheorem{assumption}{\textbf{Assumption}}

\newcommand{\bmtx}{\begin{bmatrix}}
\newcommand{\emtx}{\end{bmatrix}}
\newcommand{\bsmtx}{\left[ \begin{smallmatrix}} 
\newcommand{\esmtx}{\end{smallmatrix} \right]}

\usepackage{tikz}
\usetikzlibrary{shapes,matrix,arrows,calc,positioning,fit}
\tikzset{
dashedx/.style={}
}

\usepackage{pgf}
\usepackage{pgfplots}
\pgfplotsset{width=9.1cm,height=5.8cm,compat=newest}
\pgfplotscreateplotcyclelist{plain black}{%
	every mark/.append style={fill=gray,scale=0.6},mark=\star\\%
}

\usepackage{algorithm}
\usepackage{algpseudocode}
\usepackage{graphicx}
\usepackage[framemethod=tikz]{mdframed}
\definecolor{mycolor}{rgb}{1, 0, 0}

\newmdenv[innerlinewidth=0.5pt, roundcorner=4pt,linecolor=mycolor,innerleftmargin=6pt,
innerrightmargin=6pt,innertopmargin=6pt,innerbottommargin=6pt]{mybox}

\newcommand{\Tr}{\mathrm{Tr}}

\newcommand{\R}{\mathbb{R}}
\newcommand{\J}{\mathcal{J}}
\newcommand{\M}{\mathcal{M}}
\newcommand{\Hy}{\mathcal{H}}
\newcommand{\dimn}{\ell}
\def\bibtex{0}
\if\bibtex1
\bibliographystyle{IEEEtran}
\else
\usepackage[style=ieee, backend=bibtex]{biblatex}
\addbibresource{ref.bib}


\begin{document}
\maketitle
\thispagestyle{plain}
\pagestyle{plain}
\begin{abstract}
  We present a procedure to numerically compute
  finite step worst case
  performance guarantees on a given algorithm for the unconstrained
  optimization of strongly convex functions with Lipschitz continuous
  gradients. 
  The solution method provided serves as an alternative approach to
  that derived by Taylor, Hendrickx, and Glineur in
  [Math. Prog. 161 (1-2), 2017]. The difference lies in the fact that 
  our solution uses conditions for the interpolation of a set of
  points and gradient evaluations by the gradient of a function in the
  class of interest, whereas their solution uses conditions for the
  interpolation of a set of points, gradient evaluations, and function
  evaluations by a function in the class of interest.
  The motivation for this alternative solution is that,
  in many cases, neither the algorithm nor the performance
  metric of interest rely upon function evaluations.
  The primary development is a procedure to avoid suffering from the
  factorial growth in the number of these conditions with the size of
  the set to be interpolated when solving for the worst case
  performance.
	
\end{abstract}
\section{Introduction}

In recent years, there has been efforts
to understand the worst case performance of first-order black
box optimization algorithms on specific problem classes.
The problem class that has received the most attention by far is the
unconstrained optimization of smooth, convex functions. There have
been two primary versions of worst case performance bounds defined
for this problem class. These are the finite step ($N$-step) and asymptotic
performance bounds.

We present a means of numerically solving for the worst case 
$N$-step performance of a given first-order method on strongly convex
functions with Lipschitz continuous gradients.
The unconstrained optimization is $\min_{x \in \R^d} f(x)$, where $f$ is in a specified set
of functions $\mathcal{F}$. The approach is derived by considering
conditions for a set of points $\{(y_i, u_i)\}_{i=1}^N$ to be interpolated
by the gradient 
of a function in the class of interest, i.e. $u_i = \nabla f(y_i)$ for some $f \in \mathcal{F}$.
Furthermore, we present a method to construct functions in the
problem class on which the algorithm achieves
the worst case performance. 
The contributions rely largely upon
a few key technical results, many of which are readily available in the literature
and are discussed in Section \ref{s:technical}. The primary development is then presented
in Section \ref{s:performance}. The development
may be summarized as follows:
we write conic combinations of the interpolation conditions using
doubly hyperdominant matrices, thereby reducing the dimension of the optimization problem
solved to yield the numerical performance bounds. This also enables a procedure
to reduce the number of constraints in the optimization problem solved to
generate worst case trajectories.

The solution presented
is not the first means of solving the worst case $N$-step performance
problem. The problem is formally posed in
\cite{drori2014performance}, and
upper bounds on the
worst case performance are found. In \cite{taylor2016smooth},
an exact solution to the problem is found by
developing 
necessary and sufficient
conditions for a set of points $\{(y_i, u_i, f_i)\}_{i=1}^N$ to be interpolable by
a function and its gradient, i.e $u_i = \nabla f(y_i)$ and $f_i = f(y_i)$ for
some $f \in \mathcal{F}$. As the bound is exact, it
is possible to construct a function in the class of interest
attaining the worst case performance. 

Interpolation conditions which do not
involve the function evaluations $\{f_i\}_{i=1}^N$
are presented in 
\cite{taylor2017convex}.
The drawback is that the number of conditions scales factorially
with the size of the set to be interpolated. This fact deterred the
use of such conditions for solving the worst case performance problem.
The method proposed in our paper avoids this factorial growth in the worst-case 
$N$-step performance analysis.  Specifically, we demonstrate that the
performance can be computed by solving optimization problems with $O(N^2)$
constraints. This approach also yields example functions that achieve
the computed performance.
It should be noted that the
class of functions considered in \cite{taylor2016smooth} is slightly
broader than that considered here, as it allows for the analysis of
problem classes which include functions that are not strongly convex.
Additionally, the performance measures used in
\cite{taylor2016smooth} are more general than what is considered here. In particular,
our performance measures may not include function evaluations, while theirs can. 

The solution approach outlined in this paper for finding
worst case trajectories draws inspiration from \cite{vanscoy2019trajectories},
which provides a way of constructing worst case trajectories of a linear
system constrained to satisfy a set of integral quadratic constraints. While
their problem is focused on asymptotic analysis, we present a
related result for finite horizon analysis of linear systems satisfying a set
of integral quadratic constraints. The construction is simpler than in 
the asymptotic case, and it arises almost immediately from our proof of
the lossless S-Procedure. 

The primary motivation for solving the worst case 
performance problem using interpolation conditions that do not
involve function evaluations is that many first order algorithms rely
solely upon gradient evaluations.
Therefore, introducing function evaluations as an implicit constraints
is an unnecessary step for the analysis of such algorithms. 
Furthermore, the approach proposed in our paper can be used to assess the performance of
 feedback systems with nonlinear elements, e.g. saturation.  This avoids the introduction of
 implicit constraints upon the ``function'' evaluations which have no physical meaning in these
 analyses.


Numerical comparisons between the $N$-step performance bound solved
via the approach outlined in this paper and an asymptotic performance
bound are presented in Section \ref{s:numerical}. It is interesting to note that
on the example considered, the decay
rate of the $N$-step performance bound
with $N$ almost perfectly matches the
asymptotic convergence rate. 

\vspace{0.1in}
\textbf{Notation:} The Euclidean norm of $y\in\R^d$ is denoted
$\|y\|$. The Kronecker product of two matrices $A$ and $B$ is
represented as $A \otimes B$. A symmetric, positive semidefinite
matrix $A=A^\top$ is denoted by $A\succeq 0$. Similarly, $A \succeq B$
denotes that $A - B$ is positive semidefinite. The set of symmetric
$n \times n$ matrices will be called $\mathbb{S}^n$. 

\section{Problem Statement}
\label{sec:prob}
\subsection{Terminology}



A function $f: \R^d \mapsto \R$ is \emph{convex} if the following
inequality holds for all $y_1, y_2 \in \R^d$ and
$\theta \in [0,1]$:
\begin{align}
    f(\theta y_1 + (1-\theta) y_2) \leq  \theta f(y_1)
         + (1-\theta) f(y_2).
\end{align}
Next, let $m>0$ be given and define $g:\R^d \mapsto \R$ by
$g(y):=f(y)-\frac{m}{2}\|y\|^2$. The function $f$ is
\emph{$m$-strongly convex} if $g$ is convex. Finally, $f$ has
\emph{$L$-Lipschitz gradients} for some $L<\infty$ if the function is
differentiable and the following inequality holds for all
$y_1, y_2 \in \R^d$:
\begin{align}
    \| \nabla f(y_2) - \nabla f(y_1) \| \leq L \|y_2-y_1\|.
\end{align}
This inequality implies that the gradient of $f$ is continuous.  The
class of $m$-strongly convex functions with $L$-Lipshitz gradients is
denoted by $S_{m,L}$. We will also use the notation $S_{m,L}$ when
$m=0$ and/or $L=\infty$. The case $m=0$ corresponds to functions that
are convex but not necessarily strongly convex.  The case $L=\infty$
includes functions that need not be differentiable.  In this case, the
subdifferential of $f$ at point $y$ is denoted by $\partial f(y)$.

\subsection{Performance Bounds}
Consider the unconstrained minimization of the function
$f: \R^d \mapsto \R$:
\begin{align}
    \label{eq:unconstrained}
    \min_{y \in \R^d} f(y)
\end{align}
The function $f$ is assumed to be in $S_{m,L}$ with
$0<m < L <\infty$.  This ensures that Equation~\ref{eq:unconstrained}
has a unique minimizer $y_*$.

First-order algorithms use gradient evaluations to generate a sequence
of iterates $\{y_0, y_1, \ldots\}$ that converge to the minimizer
$y_*$. We focus on algorithms that can be expressed as a linear,
time-varying (LTV) system in feedback with the gradient.  Let
$A_k\in \R^{n \times n}$, $B_k\in \R^{n \times 1}$ and
$C_k\in \R^{1 \times n}$ be given for each $k$ and define the algorithm
as:
\begin{align}
  \label{eq:LTIAlg}
  \begin{split}
    x_{k+1} &= (A_k \otimes I_d) \, x_k + (B_k \otimes I_d) \, u_k \\
    y_k & = (C_k \otimes I_d) \, x_k \\
    u_k &= \nabla f(y_k).
  \end{split}
\end{align}
Here $u_k\in \R^d$, $y_k \in \R^d$, and $x_k\in\R^{nd}$ are the input,
output, and state at iterate $k$. This includes linear, time-invariant
(LTI) algorithms as a special case, i.e. the case where $(A,B,C)$ do
  not depend on the iteration $k$.

This formulation in Equation~\ref{eq:LTIAlg} follows the work in
\cite{lessard2016analysis} and covers a large class of algorithms. For
example, $(A_k,B_k,C_k):=(1,-\alpha_k,1)$ corresponds to gradient
descent with varying stepsize:
$y_{k+1} = y_k-\alpha_k \nabla f (y_k)$. As a second example, define
the first-order algorithm with the state
$x_k:=\bmtx y_k^T & y_{k-1}^T\emtx^T$ and the following matrices:
\begin{align}
\label{eq:hb1}
  A := \bmtx (1+\beta) & -\beta \\ 1 & 0 \emtx, \,\,
  B := \bmtx -\alpha \\ 0 \emtx, \,\,
  C := \bmtx 1 & 0 \emtx.
\end{align}
This corresponds to the heavy-ball algorithm with constant parameters:
\begin{align}
\label{eq:hb2}
  y_{k+1} = y_k -\alpha \nabla f(y_k) + \beta (y_k - y_{k-1})
\end{align}
Other algorithms can be modeled as in Equation~\ref{eq:LTIAlg}
including Nesterov's accelerated method \cite{nesterov1983accelerated} and the triple
momentum method \cite{vanscoy2018TMM}.  


We assume the algorithm has an optimal state $x_*$ corresponding to
the minimizer $y_*$. Thus for each $f\in S_{m,L}$ there is a state
$x_*$ such that $x_0=x_*$ yields the iterates $x_k=x_*$ and $y_k=y_*$
for $k=0,1,\ldots$. Note that the minimizer $y_*$ satisfies
$\nabla f(y_*)=0$.  Thus the state $x_*$ must satisfy
\begin{align}
  x_* = (A_k \otimes I_d)x_*  \mbox{ and } y_* = (C_k\otimes I_d) x_*. 
\end{align}
The optimal states for gradient descent and heavy-ball are
$x_*=y_*$ and $x_* := \bmtx (y_*)^T & (y_*)^T \emtx^T$, respectively.

We consider finite-step worst-case performance
over all functions $f \in S_{m,L}$.  The performance bound of
interest is formally defined next.


\begin{definition}
  \label{def:fixedstepbnd}
  Consider a time-varying algorithm defined by $\{A_k\}_{k=0}^{N-1}$,
  $\{B_k\}_{k=0}^{N-1}$, and $\{C_k\}_{k=0}^{N}$. The
  \underline{worst-case, $N$-step performance bound} on $S_{m,L}$ is
  the smallest value of $b$ such that for any $f\in S_{m,L}$:
  \begin{align*}
    \|y_N - y_* \| \leq b \|x_0-x_*\|.
  \end{align*}
\end{definition}


This definition bounds the convergence of the iterate $y_k$ to the
minimizer $y_*$.  More general performance measures are considered in
\cite{taylor2016smooth}, including convergence of the function values
$f(y_k)$ to the minimal value $f(y_*)$.  



In the remainder of the paper we assume that $y_*=0$ and $x_*=0$.
This assumption simplifies the notation and is without loss of
generality by a coordinate shift. Specifically, assume $f\in S_{m,L}$
is minimized at $y_*\ne 0$ and the algorithm has an optimal state
$x_* \ne 0$.  Redefine the algorithm state and output to be
$\tilde{x}_k:= x_k-x_*$ and $\tilde{y}_k:=y_k-y_*$.  Define the shifted
function $\tilde{f}\in S_{m,L}$ by
$\tilde{f}( \tilde{y} ):= f( \tilde{y}+y_* )$. The shifted function
$\tilde{f}$ has $\tilde{y}=0$ as its minimizer. Moreover, the
finite-step performance bound is unchanged by this
coordinate shift.

\section{Technical Results}
\label{s:technical}
Prior to presenting the solution to the worst case performance
problem, we derive a series of technical results. Many of these
results are readily available in the literature, and references are
provided for more detailed proofs.
\subsection{Interpolation Conditions}
\label{ss:interpolation}


Consider the set 
$\{ (y_i,u_i) \}_{i\in \mathcal{I}} \subset \R^d \times \R^d$ with
$\mathcal{I}:=\{0,\ldots,R\}$.  This subsection presents conditions to
interpolate this finite set of data by the gradient of a function in
$S_{m,L}$.

\begin{definition}
  \label{def:SmLinterp}
  The set $\{ (y_i,u_i) \}_{i\in \mathcal{I}}$ is
  \underline{$S_{m,L}$ interpolable} for
  $0 \le m < L \le \infty$ if there exists 
  $f \in S_{m,L}$ such that:
  \begin{itemize}
  \item ($L<\infty$):   $\nabla f(y_i) = u_i$ for all $i \in \mathcal{I}$.
  \item ($L=\infty$):   $u_i \in \partial f(y_i)$ for all $i \in \mathcal{I}$.
  \end{itemize}
\end{definition}

Lemma~\ref{lem:SmLinterp}, stated below,  provides a necessary and sufficient condition for
$S_{m,L}$ interpolation with $0<m < L <\infty$. This result
is available as Lemma 3.25 of \cite{taylor2017convex}. The proof 
is a variation of the proof of
Theorem 4 in \cite{taylor2016smooth} which provides interpolation
conditions involving both gradient and function evaluations.
Definition~\ref{def:SmLinterp} also allows $m=0$ and/or $L=\infty$
because these cases are needed for intermediate technical results.
Cyclic monotonicity, defined next, plays a key role in the various
interpolation conditions.

\begin{definition}
  \label{def:cyclicmonotone}
  The set $\{ (y_i,u_i) \}_{i\in \mathcal{I}}$ is
  \underline{cyclically monotone} if the following inequality holds
  for any cycle of indices $\{i_0, i_1, \hdots, i_J=i_0\}$:
  \begin{align}
    \label{eq:cyclicmonotone}
    \sum_{j = 0}^{J-1} u_{i_j}^\top (y_{i_{j+1}} - y_{i_j}) \le 0
  \end{align}
\end{definition}

We first state a condition from \cite{lambert2004finite} for a finite
set of data to be $S_{0, \infty}$ interpolable.

\begin{lemma}
    \label{lem:S0infinterp}
    The set $\{ (y_i,u_i) \}_{i\in \mathcal{I}}$ is $S_{0,\infty}$
    interpolable if and only if it is cyclically monotone.
\end{lemma}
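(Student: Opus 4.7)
The plan is to prove both directions by standard techniques going back to Rockafellar. For necessity, I would assume that some $f \in S_{0,\infty}$ realizes $u_i \in \partial f(y_i)$ and apply the subgradient inequality $f(z) \ge f(y_{i_j}) + u_{i_j}^\top(z - y_{i_j})$ at $z = y_{i_{j+1}}$ around the cycle. Summing these $J$ inequalities makes the right-hand side telescope to $f(y_{i_J}) - f(y_{i_0}) = 0$, which immediately delivers the cyclic monotonicity inequality \eqref{eq:cyclicmonotone}.

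For sufficiency, I would adopt Rockafellar's explicit construction. Fix any base index $i_0 \in \mathcal{I}$ and for $y \in \R^d$ define
\begin{align*}
f(y) := \sup \Bigl\{ u_{i_k}^\top(y - y_{i_k}) + \sum_{j=0}^{k-1} u_{i_j}^\top(y_{i_{j+1}} - y_{i_j}) \Bigr\},
\end{align*}
where the supremum ranges over all finite index sequences $i_0, i_1, \ldots, i_k$ drawn from $\mathcal{I}$ with the first entry fixed to the chosen base $i_0$. Since this is a supremum of affine functions of $y$, $f$ is convex. Using cyclic monotonicity applied to the closed cycle obtained by appending $i_0$ to any sequence ending at $i_k = i_0$, one shows that $f(y_{i_0}) = 0$ so $f$ is finite, and more generally that at any $y_i$ the supremum is attained (or approached) by a sequence ending at $i$.

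To verify $u_i \in \partial f(y_i)$ I would fix $y \in \R^d$ and an arbitrary feasible sequence ending at $i_k = i$ that witnesses $f(y_i)$ up to $\epsilon$; appending the single step to $y$ via $u_i$ gives a feasible sequence for $f(y)$ whose value is exactly $f(y_i) + u_i^\top(y - y_i)$ up to $\epsilon$. Taking $\epsilon \to 0$ yields $f(y) \ge f(y_i) + u_i^\top(y - y_i)$, i.e. the subgradient inequality at $y_i$.

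The main obstacle is the sufficiency direction, and within it the subtle point is showing that the sup defining $f$ is finite at every $y_i$ (and hence everywhere, by convexity and the Lipschitz-like bound coming from the affine generators). This is precisely where cyclic monotonicity is used nontrivially: it ensures that no closed loop of appended steps can drive the value upward, so the sequences witnessing $f(y_i)$ are bounded above. Once finiteness and convexity are in hand, the subgradient verification is the short argument sketched above, completing the equivalence.
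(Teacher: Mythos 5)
Your proof is correct, and your necessity direction is exactly the paper's argument: apply the subgradient inequality around the cycle and sum, so the function values cancel because $i_J=i_0$. For sufficiency the routes differ: the paper does not reprove that direction but invokes Theorem 3.4 of \cite{lambert2004finite}, whose interpolant is the pointwise maximum of affine functions in Equation~\ref{eq:PointwiseMaxInterp} with constants $\lambda_i$ obtained from a linear program --- a form the paper reuses later when building explicit worst-case functions. You instead give Rockafellar's classical chain-supremum potential (essentially the proof of Theorem 24.8 in \cite{rockafellar1970convex}), which is self-contained and makes the role of cyclic monotonicity transparent, and your subgradient verification by appending one more index is the standard one. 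Two details are worth tightening, though neither is a gap. First, since the paper's $S_{0,\infty}$ consists of real-valued convex functions on $\R^d$, you should note that with finitely many indices the supremum is in fact a finite maximum: any loop in a chain contributes a nonpositive amount by cyclic monotonicity, so loop-free chains suffice; equivalently, closing a chain back to the base index gives $f(y_i)\le u_i^\top(y_i-y_{i_0})$, and then $f(y)\le \max_{i}\left[ f(y_i)+u_i^\top(y-y_i)\right]$ because every generator ending at $i$ has slope $u_i$ and value at most $f(y_i)$ at $y_i$. Second, your restriction to witnessing chains ending at $i$ is justified by the one-line observation you allude to but do not prove: appending $i$ to any chain leaves its value at $y_i$ unchanged, so the supremum at $y_i$ is approached by chains ending at $i$. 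With those remarks in place, your construction buys a citation-free argument, while the paper's citation buys the explicit, LP-computable interpolant it needs elsewhere.
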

\begin{proof}

  Assume the set is $S_{0, \infty}$ interpolable, i.e. there exists
  $f\in S_{0,\infty}$ such that $u_i \in \partial f(y_i)$ for all
  $i\in \mathcal{I}$ and hence:
  \begin{align}
    \label{eq:fsubgrad}
    f(y) \ge f(y_i) + u_i^\top (y - y_i) \,\,\, \forall y \in \R^d
  \end{align}
  Apply this inequality to any cycle
  $\{i_0, i_1, \hdots, i_J=i_0\}$:
  \begin{align}
    f(y_{i_{j+1}}) \ge f(y_{i_j}) + u_{i_j}^\top (y_{i_{j+1}} - y_{i_j}) 
  \end{align}
  Sum these inequalities from $j=0$ to $j=J$ to demonstrate that
  Equation~\ref{eq:cyclicmonotone} holds. This is valid for any cycle
  and hence the finite set of data is cyclically monotone.  This
  direction of the proof is formally stated as Theorem 24.8 of
  \cite{rockafellar1970convex}.

  Conversely, assume the set is cyclically monotone. Then by Theorem
  3.4 in \cite{lambert2004finite} there is a function
  $f\in S_{0,\infty}$ that interpolates the data.
\end{proof}

If the data is cyclically monotone then there are many choices for an
interpolating function in $S_{0,\infty}$. Theorem 3.4 and Proposition
3.5 in \cite{lambert2004finite} provides an explicit construction for
an interpolating function of the form:
\begin{align}
  \label{eq:PointwiseMaxInterp}
  f(y) = \max_{i\in \mathcal{I}} \left[ \lambda_i + u_i^\top (y-y_i)\right]
\end{align}
The constants $\{\lambda_i\}_{i\in \mathcal{I}}$ are computed from a
linear program and satisfy $f(y_i) = \lambda_i$.  This construction is
a pointwise maximum of affine functions. The construction simplifies
further if the data is one-dimensional ($d=1$). Details for the case
with $d=1$ are provided in Section 8 of \cite{lambert2004finite}.


Next, two additional supporting lemmas are presented before stating the
main result.

\begin{lemma}
  \label{lem:shiftinterp}
  The set $\{ (y_i,u_i) \}_{i\in \mathcal{I}}$ is $S_{m,L}$
  interpolable if and only if the set
  $\{ (y_i,u_i-m y_i) \}_{i\in \mathcal{I}}$
  is $S_{0, L-m}$ interpolable.
\end{lemma}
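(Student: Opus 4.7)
The strategy is to pass between $f$ and the shifted function $g(y) := f(y) - \frac{m}{2}\|y\|^2$. When $L < \infty$ one has $\nabla g(y) = \nabla f(y) - m y$, and when $L = \infty$ one has $\partial g(y) = \partial f(y) - m y$; in either case the data set $\{(y_i, u_i)\}$ is interpolated by $f$ exactly when $\{(y_i, u_i - m y_i)\}$ is interpolated by $g$. So the lemma reduces to showing that the map $f \mapsto f - \frac{m}{2}\|\cdot\|^2$ is a bijection between $S_{m,L}$ and $S_{0, L-m}$.

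For the forward direction, assume $f \in S_{m,L}$. Convexity of $g$ is exactly the definition of $m$-strong convexity of $f$, so the case $L = \infty$ is done. For $L < \infty$ I would invoke the standard equivalence that a convex function $h$ has an $L'$-Lipschitz gradient if and only if $\frac{L'}{2}\|\cdot\|^2 - h$ is convex. Applied to $h = f$ with $L' = L$, this says that $\frac{L}{2}\|\cdot\|^2 - f = \frac{L-m}{2}\|\cdot\|^2 - g$ is convex, which by the same equivalence shows that $g$ has an $(L-m)$-Lipschitz gradient. Hence $g \in S_{0, L-m}$.

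For the reverse direction, assume $g \in S_{0, L-m}$ and set $f := g + \frac{m}{2}\|\cdot\|^2$. Strong convexity of $f$ is again immediate from the definition. When $L < \infty$, the identity $\nabla f(y) = \nabla g(y) + m\,y$ together with the triangle inequality gives $\|\nabla f(y_2) - \nabla f(y_1)\| \le (L-m)\|y_2-y_1\| + m\|y_2-y_1\| = L\|y_2-y_1\|$, so $f \in S_{m,L}$. The lone nontrivial ingredient is the convex-analytic characterization of Lipschitz gradients cited in the forward direction; this is standard (see e.g.\ Nesterov) and represents the only step that is not a direct definitional unpacking. Everything else is symmetric and essentially bookkeeping on the shift.
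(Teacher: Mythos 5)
Your proof is correct and follows essentially the same route as the paper: pass between $f$ and $g := f - \tfrac{m}{2}\|\cdot\|^2$, noting that $g$ interpolates the shifted data and lies in $S_{0,L-m}$, with the converse symmetric. You simply spell out the details the paper leaves implicit, in particular the nontrivial forward step that $\nabla g$ is $(L-m)$-Lipschitz (via the standard convexity characterization rather than a lossy triangle inequality), which is a correct and worthwhile elaboration.
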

\begin{proof}
  Suppose $f \in S_{m,L}$ interpolates
  $\{ (y_i,u_i) \}_{i\in \mathcal{I}}$.  Define the function $g$ by
  $g(y) := f(y) - \frac{m}{2} \|y\|^2$. Then $g$ is in $S_{0,L-m}$ and
  it interpolates $\{ (y_i,u_i-my_i) \}_{i\in
    \mathcal{I}}$. The converse follows similarly.
\end{proof}

\begin{lemma}
  \label{lem:conjinterp}
  The set $\{ (y_i,u_i) \}_{i\in \mathcal{I}}$ is $S_{0,L}$
  interpolable if and only if the set
  $\{ (u_i,y_i) \}_{i\in \mathcal{I}}$
  is $S_{1/L, \infty}$ interpolable.
\end{lemma}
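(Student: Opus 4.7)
The plan is to use Fenchel conjugation as the bridge. Recall that for any function $h$ one defines $h^*(u) := \sup_{y} \bigl[ u^\top y - h(y)\bigr]$, and that the following duality facts hold for closed, proper, convex functions:

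\begin{itemize}
\item $h \in S_{0,L}$ (convex with $L$-Lipschitz gradient) if and only if $h^* \in S_{1/L,\infty}$ (i.e., $h^*$ is closed proper and $\tfrac{1}{L}$-strongly convex). This is a standard duality between smoothness and strong convexity, essentially Baillon--Haddad.
\item For any closed proper convex $h$, the Fenchel--Young identity gives $u \in \partial h(y) \iff y \in \partial h^*(u)$, and in the differentiable case this specializes to $\nabla h(y) = u \iff y \in \partial h^*(u)$.
\item Biconjugation: $h^{**} = h$ on the class of closed proper convex functions.
\end{itemize}

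With these in hand, the proof proceeds by symmetric application of conjugation. For the forward direction, suppose $f \in S_{0,L}$ interpolates $\{(y_i,u_i)\}_{i\in\mathcal{I}}$, so $\nabla f(y_i) = u_i$ for each $i$. Define $g := f^*$. By the first bullet, $g \in S_{1/L,\infty}$. By the Fenchel--Young identity applied in the direction $\nabla f(y_i) = u_i \Rightarrow y_i \in \partial f^*(u_i) = \partial g(u_i)$, the function $g$ interpolates the swapped data $\{(u_i, y_i)\}_{i \in \mathcal{I}}$ in the sense of Definition~\ref{def:SmLinterp} for $S_{1/L,\infty}$. For the converse direction, suppose $g \in S_{1/L,\infty}$ interpolates $\{(u_i, y_i)\}_{i \in \mathcal{I}}$, i.e. $y_i \in \partial g(u_i)$. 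Set $f := g^*$. Again by the first bullet, $f \in S_{0,L}$, so in particular $f$ is differentiable. The Fenchel--Young identity gives $y_i \in \partial g(u_i) \iff u_i \in \partial g^*(y_i) = \partial f(y_i) = \{\nabla f(y_i)\}$, so $\nabla f(y_i) = u_i$ and $f$ interpolates $\{(y_i,u_i)\}_{i\in\mathcal{I}}$ as required.

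The main obstacle, and the reason the result is more than a one-line Fenchel--Young computation, is the first bullet above: the quantitative conjugate duality between $L$-Lipschitz-gradient convexity and $1/L$-strong convexity. The smoothness-to-strong-convexity implication requires taking a $g \in S_{1/L,\infty}$ that is possibly nondifferentiable and showing its conjugate lands in $S_{0,L}$; this is exactly why the statement of the lemma is asymmetric, coupling the smooth class $S_{0,L}$ with the possibly nondifferentiable class $S_{1/L,\infty}$ rather than with $S_{1/L,\infty}$ restricted to smooth functions. I would invoke this duality as a known result (citing, e.g., the same convex analysis references used for Lemma~\ref{lem:S0infinterp}) rather than rederiving it, since once it is in hand the rest of the argument is just the symmetric swap described above.
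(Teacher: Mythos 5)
Your proposal is correct and follows essentially the same route as the paper: conjugate $f$, use the smoothness/strong-convexity duality between $S_{0,L}$ and $S_{1/L,\infty}$, and swap the data via the subgradient inversion $u \in \partial f(y) \iff y \in \partial f^*(u)$. The paper simply packages both of these facts into a single citation (Proposition 12.60 of Rockafellar and Wets), which covers the duality you would otherwise attribute to Baillon--Haddad-type results.
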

\begin{proof}
  Suppose $f \in S_{m,L}$ interpolates 
  $\{ (y_i,u_i) \}_{i\in \mathcal{I}}$.  Define the conjugate of $f$
  by $f^*(u) := \sup_y u^\top y - f(y)$. It follows from Proposition
  12.60 of \cite{rockafellar1998variational} that $f^*$ interpolates
  $\{ (u_i,y_i) \}_{i\in \mathcal{I}}$ and is in $S_{1/L,\infty}$.
  Proposition 12.60 also demonstrates the converse. In particular, if
  a function in $S_{1/L, \infty}$ interpolates
  $\{ (u_i,y_i) \}_{i\in \mathcal{I}}$ then its conjugate is in
  $S_{0,L}$ and interpolates   $\{ (y_i,u_i) \}_{i\in \mathcal{I}}$. 
\end{proof}

Finally, we state the main interpolation result for $S_{m,L}$ with
$0<m < L <\infty$.

\begin{lemma} (Lemma 3.25 in \cite{taylor2017convex})
    \label{lem:SmLinterp}
    The set $\{ (y_i,u_i) \}_{i\in \mathcal{I}}$ is $S_{m,L}$
    interpolable with $0 < m < L <\infty$ if and only if 
    $\{ (L y_i-u_i, u_i-my_i) \}_{i\in \mathcal{I}}$ is cyclically
    monotone.
\end{lemma}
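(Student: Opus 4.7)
The plan is to chain Lemmas~\ref{lem:shiftinterp}, \ref{lem:conjinterp}, and \ref{lem:S0infinterp} to reduce $S_{m,L}$ interpolability of $\{(y_i, u_i)\}$ step by step into a cyclic monotonicity condition, and then to recognize that condition as the one claimed in the lemma. The overall strategy is a reduction by lemma chasing, followed by a short algebraic identification at the end.

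In detail, I will proceed in four steps. First, I apply Lemma~\ref{lem:shiftinterp} to pass from $S_{m,L}$ interpolability of $\{(y_i, u_i)\}$ to $S_{0, L-m}$ interpolability of $\{(y_i,\, u_i - m y_i)\}$. Second, I apply Lemma~\ref{lem:conjinterp} with Lipschitz bound $L-m$ to swap the two coordinates, obtaining $S_{1/(L-m),\, \infty}$ interpolability of $\{(u_i - m y_i,\, y_i)\}$. Third, I apply Lemma~\ref{lem:shiftinterp} a second time, now with strong convexity parameter $m' = 1/(L-m)$ and Lipschitz bound $L' = \infty$, reducing the problem to $S_{0, \infty}$ interpolability of $\{(u_i - m y_i,\, y_i - \tfrac{1}{L-m}(u_i - m y_i))\}$. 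Fourth, Lemma~\ref{lem:S0infinterp} converts this final interpolability statement into cyclic monotonicity of the same set.

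The remaining task is to check that this cyclic monotonicity statement agrees with cyclic monotonicity of $\{(L y_i - u_i,\, u_i - m y_i)\}$. The key algebraic identity
\[
y_i - \tfrac{1}{L-m}(u_i - m y_i) \;=\; \tfrac{1}{L-m}(L y_i - u_i)
\]
rewrites the intermediate set as $\{(u_i - m y_i,\, \tfrac{1}{L-m}(L y_i - u_i))\}$. Because $L - m > 0$, the positive scalar $\tfrac{1}{L-m}$ factors out of the cycle sum uniformly and can be dropped without affecting its sign. What remains is the observation that cyclic monotonicity of $\{(a_i, b_i)\}$ is equivalent to cyclic monotonicity of $\{(b_i, a_i)\}$: expanding the two cycle sums and shifting the summation index identifies one as the cycle sum for the reverse traversal of the other cycle, and since Definition~\ref{def:cyclicmonotone} quantifies over all cycles, both traversals are covered.

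The principal obstacle I anticipate is the bookkeeping in this final equivalence together with accurate propagation of parameter substitutions through the two applications of Lemma~\ref{lem:shiftinterp}: one must check that the composition of shift, conjugation, and shift really does collapse, after scaling and coordinate swap, to the pair $(L y_i - u_i,\, u_i - m y_i)$. No analytic subtleties arise since all intermediate transformations are linear and invertible for $0 < m < L < \infty$; the argument is essentially a reduction to the $S_{0,\infty}$ interpolation characterization of Lemma~\ref{lem:S0infinterp} followed by routine algebra.
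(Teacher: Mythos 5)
Your proposal is correct and follows essentially the same route as the paper: the identical chain of Lemma~\ref{lem:shiftinterp}, Lemma~\ref{lem:conjinterp}, and Lemma~\ref{lem:shiftinterp} again reduces the problem to $S_{0,\infty}$ interpolability of $\{(u_i-my_i,\ \tfrac{1}{L-m}(Ly_i-u_i))\}$, after which the positive factor $\tfrac{1}{L-m}$ is dropped. The only (harmless) deviation is at the last step: the paper performs the coordinate swap with one more application of Lemma~\ref{lem:conjinterp} (its statement 4 $\leftrightarrow$ 5, using $L=\infty$) before invoking Lemma~\ref{lem:S0infinterp}, whereas you invoke Lemma~\ref{lem:S0infinterp} first and justify the swap directly via the cycle-reversal symmetry of Definition~\ref{def:cyclicmonotone}, which is a valid argument.
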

\begin{proof}
  The five statements below are equivalent.
  Lemma~\ref{lem:shiftinterp} implies $1\leftrightarrow 2$ and
  $3\leftrightarrow 4$. Lemma~\ref{lem:conjinterp} implies
  $2\leftrightarrow 3$ and $4\leftrightarrow 5$.
  \begin{enumerate}
  \item $\{ (y_i,u_i) \}_{i\in \mathcal{I}}$ is $S_{m,L}$ interpolable.
  \item $\{ (y_i,u_i-my_i) \}_{i\in \mathcal{I}}$ is $S_{0, L-m}$ interpolable.
  \item $\{ (u_i-my_i,y_i) \}_{i\in \mathcal{I}}$ is 
    $S_{1/(L-m), \infty}$ interpolable.
  \item  $\{ (u_i-my_i,\, \frac{1}{L-m} ( L y_i - u_i)  \}_{i\in
      \mathcal{I}}$ is $S_{0, \infty}$ interpolable.
  \item  $\{ (\frac{1}{L-m} ( L y_i - u_i), \, u_i-my_i  \}_{i\in
      \mathcal{I}}$ is $S_{0, \infty}$ interpolable.
  \end{enumerate}
  Finally, Lemma~\ref{lem:S0infinterp} implies condition 5) is
  equivalent to cyclic monotonicity of
  $\{ (L y_i-u_i, u_i-my_i) \}_{i\in \mathcal{I}}$. This step requires
  factoring the constant $\frac{1}{L-m}>0$ from each term in the
  cyclic mononotinicity constraint.
\end{proof}


If the set $\{ (L y_i-u_i, u_i-my_i) \}_{i\in \mathcal{I}}$ is
cyclically monotone then an interpolating function in $S_{m,L}$ can be
constructed as follows.  First, use Lemma~\ref{lem:S0infinterp} to
construct a function $f_4 \in S_{0,\infty}$ that interpolates the data
in Statement 4.  This can be done with a pointwise maximum of affine
functions as in Equation~\ref{eq:PointwiseMaxInterp}. Next,
interpolate the data in Statement 3 with $f_3\in S_{1/(L-m),\infty}$
defined by $f_3(y):=f_4(y)+\frac{1}{2(L-m)} \|y\|^2$. Interpolate the
data in Statement 2 by taking the conjugate: $f_2=f_3^*\in
S_{0,L-m}$. Note that evaluating $f_2(y)$ involves solving the
maximization in the definition of the conjugate. Hence the function
$f_2$ does not, in general, have an explicit expression.  Finally,
interpolate the original data with $f_1 \in S_{m,L}$ defined by
$f_1(y):=f_2(y)+\frac{m}{2} \|y\|^2$.

\subsection{S-Procedure}
\label{ss:sprocedure}


Let $M_0, M_1,\ldots, M_L \in \mathbb{S}^{\dimn}$ be given and
define quadratic functions $\sigma_i: \R^{\dimn d} \mapsto \R$ for
$i=0,\ldots,L$ by:
\begin{align}
  \label{eq:sigmai}
  \sigma_i(\eta)  := \eta^\top (M_i \otimes I_{d}) \eta
\end{align}
The matrices $M_i$ are not necessarily sign definite.  This section
reviews a technical result to answer the following question: Let
$\eta\ne 0$ be given. Does $\sigma_i(\eta) \ge 0$ for $i=1,\ldots,L$ imply
that $\sigma_0(\eta) \ge 0$?  The next lemma provides an exact linear
matrix inequality condition to answer this question.  This is known as
the (lossless) S-procedure, discussed in Section 2.6.3 of
\cite{boyd1994linear}. The formulation below is essentially from
\cite{taylor2018lyapunov} (see Theorem 7 and Appendices A/B). 

In order to ensure that the optimization problems considered throughout the remainder
of this section attain their optimal solutions, we require the following assumption:
\begin{assumption}
  \label{a:rank}
	There exists an $\eta \in \{\eta|\sigma_i(\eta)\ge 0, i=1,\hdots,L\}$ such that
	when $\eta\in \R^{\dimn d}$ is partitioned as:
  \begin{align*}
    \eta:=\bmtx \eta_1 \\ \vdots \\ \eta_\dimn \emtx \mbox{ where each } \eta_i \in \R^d
  \end{align*}
  and the blocks are stacked into a matrix
  $B:=\bmtx \eta_1, \ldots, \eta_\dimn\emtx \in \R^{d\times \dimn}$,
  $B$ has rank $\ge \dimn$.
\end{assumption}
We demonstrate in Section \ref{s:performance} that this assumption holds for the constraints
applied to solve the $N$-step performance problem.


\begin{lemma}
  \label{lem:losslessS}
  Consider the following two statements involving the quadratic
  functions in Equation~\ref{eq:sigmai}
  \begin{enumerate}
  \item If $\eta\ne 0$ and $\sigma_i(\eta) \ge 0$ for $i=1,\ldots,L$
    then $\sigma_0(\eta) \ge 0$.
  \item There exists non-negative scalars
    $\{\lambda_1, \ldots, \lambda_L\}$ such that
    $M_0 - \sum_{i=1}^L \lambda_i M_i \succeq 0$.
  \end{enumerate}
  Statement 2) implies 1). Moreover, if $d\ge \dimn$ and Assumption \ref{a:rank} holds,
  then Statement 1) implies 2).
\end{lemma}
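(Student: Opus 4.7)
The easy direction, that Statement 2 implies Statement 1, will be handled by direct substitution. For any $\eta$ with $\sigma_i(\eta) \ge 0$, one computes
$$\sigma_0(\eta) - \sum_{i=1}^L \lambda_i \sigma_i(\eta) = \eta^\top \left( \left(M_0 - \sum_{i=1}^L \lambda_i M_i\right) \otimes I_d \right) \eta \ge 0,$$
and combining with $\lambda_i \ge 0$ yields $\sigma_0(\eta) \ge 0$.

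For the converse, assuming $d \ge \dimn$, my plan is first to reformulate the implication on $\R^{\dimn d}$ as an equivalent one on the positive semidefinite cone $\mathbb{S}^{\dimn}_{+}$. Using the partition of $\eta$ into blocks $\eta_1,\ldots,\eta_\dimn \in \R^d$ and the stacked matrix $B = [\eta_1,\ldots,\eta_\dimn] \in \R^{d \times \dimn}$, a direct calculation with the Kronecker product shows that $\sigma_i(\eta) = \Tr(M_i B^\top B)$. Setting $X := B^\top B$, the key observation is that when $d \ge \dimn$ every $X \in \mathbb{S}^{\dimn}$ with $X \succeq 0$ arises this way, for instance via a Cholesky factorization padded with zero rows in $B$. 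Hence Statement 1 is equivalent to the implication: $X \succeq 0$ and $\Tr(M_i X) \ge 0$ for $i=1,\ldots,L$ imply $\Tr(M_0 X) \ge 0$.

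With this reformulation in hand, I would apply a conic Farkas-type lemma, or equivalently semidefinite-programming duality, to the program
$$p^* = \inf_{X \succeq 0} \left\{\Tr(M_0 X) \,:\, \Tr(M_i X) \ge 0,\ i=1,\ldots,L \right\}.$$
The hypothesis yields $p^* \ge 0$, and $X = 0$ is primal feasible, so $p^* = 0$. Dual feasibility of this program is precisely the existence of non-negative $\lambda_i$ with $M_0 - \sum_i \lambda_i M_i \succeq 0$, so what remains is to certify strong duality.

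The main obstacle is this strong-duality step, which can fail for general conic programs. The role of Assumption \ref{a:rank} will be to supply the needed Slater-type regularity: the assumed $\eta$ whose stacking matrix $B$ has rank $\dimn$ yields $X^{\star} := B^\top B \succ 0$, a strictly feasible interior point for the PSD constraint. I would then invoke a separating-hyperplane argument on the convex cone $\{(\Tr(M_0 X),\ldots,\Tr(M_L X)) : X \succeq 0\} \subset \R^{L+1}$ to produce multipliers $(\mu_0,\lambda_1,\ldots,\lambda_L) \neq 0$ with $\mu_0 \ge 0$, $\lambda_i \ge 0$, and $\mu_0 M_0 - \sum_i \lambda_i M_i \succeq 0$. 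Strict positive definiteness of $X^{\star}$ is precisely what rules out the degenerate case $\mu_0 = 0$: in that case one would have $\sum_i \lambda_i M_i \preceq 0$ with $\Tr\bigl((\sum_i \lambda_i M_i) X^{\star}\bigr) \ge 0$, forcing $\sum_i \lambda_i M_i = 0$ and contradicting nontriviality of the separator. Rescaling by $\mu_0$ then delivers the non-negative $\lambda_i$ claimed in Statement 2.
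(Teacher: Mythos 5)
Your forward direction and your reformulation via $X=B^\top B$ (valid since $d\ge\dimn$, and including $X=0$ is harmless) are correct and essentially the same as the paper's. The gap is in the one step where the whole difficulty of the converse is concentrated: ruling out $\mu_0=0$ in the separation argument. From $\sum_i\lambda_i M_i\preceq 0$, $\lambda_i\ge 0$, and the feasible $X^\star\succ 0$ supplied by Assumption~\ref{a:rank} you do correctly deduce $\sum_i\lambda_i M_i=0$; but this does \emph{not} contradict the nontriviality of the separator. Nontriviality says the coefficient vector $(\mu_0,\lambda_1,\ldots,\lambda_L)$ is nonzero, and $\sum_i\lambda_i M_i=0$ with $\lambda\neq 0$ is entirely possible when the $M_i$ are linearly dependent --- for instance $M_2=-M_1$ (i.e., the admissible pair of constraints $\sigma_1(\eta)\ge 0$ and $-\sigma_1(\eta)\ge 0$), in which case $(\mu_0,\lambda)=(0,1,1,0,\ldots,0)$ defines a genuine weakly separating hyperplane that contains the entire image cone $\{(\Tr(M_0X),\ldots,\Tr(M_LX)) : X\succeq 0\}$. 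The separating-hyperplane theorem may hand you exactly this degenerate separator, and your argument provides no mechanism to replace it by one with $\mu_0>0$. So the step where Assumption~\ref{a:rank} is supposed to do its work does not go through as written.

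For comparison, the paper avoids the homogeneous degenerate case by normalizing: it minimizes $\Tr(M_0G)$ subject to $\Tr(G)=1$, $G\succeq 0$, $\Tr(M_iG)\ge 0$. The normalization puts an extra multiplier $\nu$ in the dual, which is therefore always strictly feasible (take $\nu$ sufficiently negative), giving zero duality gap and primal attainment; Assumption~\ref{a:rank} then supplies a primal-feasible $G\succ 0$, which guarantees \emph{dual} attainment (Proposition 6.3.2 of \cite{bertsekas1999nonlinear}), and $p_*\ge 0$ yields $\lambda_i\ge 0$, $\nu\ge 0$ with $M_0-\sum_i\lambda_iM_i\succeq\nu I\succeq 0$. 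If you prefer to keep your homogeneous formulation, the repair is to replace the bare separation step by a strong-duality/Kuhn--Tucker theorem for convex programs whose explicit constraints are affine over a convex domain (e.g., Theorem 28.2 of \cite{rockafellar1970convex}): since the constraints $\Tr(M_iX)\ge 0$ are affine, a feasible point in the relative interior of the cone, namely $X^\star\succ 0$, together with finiteness of $p^*=0$ guarantees a multiplier vector $\lambda\ge 0$ with $\inf_{X\succeq 0}\Tr\bigl((M_0-\sum_i\lambda_iM_i)X\bigr)=0$, i.e., $M_0-\sum_i\lambda_iM_i\succeq 0$. Either way, as currently written the proof does not close the $\mu_0=0$ case.
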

\begin{proof}
  (2$\rightarrow$ 1) Note that 2) implies:
  \begin{align*}
    (M_0 \otimes I_{d}) - \sum_{i=1}^L \lambda_i (M_i \otimes I_{d
    }) \succeq 0 
  \end{align*}
  Multiply on the right and left by any $\eta$ and $\eta^T$ to obtain:
  \begin{align*}
    \sigma_0(\eta) \ge \sum_{i=1}^L \lambda_i \sigma_i(\eta) 
  \end{align*}
  Statement 1 follows from this inequality and using $\lambda_i \ge 0$.

  \vspace{0.1in}

  (1$\rightarrow$ 2) Assume Statement 1 holds and $d \ge \dimn$.  Consider
  the following optimization:
  \begin{align*} 
    p_* = & \min_{\eta\in \R^{\dimn d}} \sigma_0(\eta) \\
          & \mbox{subject to: } \|\eta\| = 1 \\
          & \hspace{0.62in} 
            \sigma_i(\eta) \ge 0 \mbox{ for } i = 1, \ldots, L
  \end{align*}
  Note that Statement 1 implies that $\sigma_0(\eta)\ge 0$ for any
  feasible $\eta$ for this optimization and hence $p_* \ge 0$.


  Next, partition, $\eta\in \R^{\dimn d}$ as follows:
  \begin{align*}
    \eta:=\bmtx \eta_1 \\ \vdots \\ \eta_\dimn \emtx \mbox{ where each } \eta_i \in \R^d
  \end{align*}
  Stack the partitioned blocks of $\eta$ into a matrix
  $B:=\bmtx \eta_1, \ldots, \eta_\dimn\emtx \in \R^{d\times \dimn}$. It can be
  shown, using the Kronecker product structure, that
  $\sigma_i(\eta) = \Tr(M_i B^\top B)$ for $i=0,\ldots,L$.
  As a consequence, the minimization can be equivalently written
  in terms of $G:=B^\top B \in  \mathbb{S}^\dimn$.
  \begin{align*}
    \begin{split}
    p_* = & \min_{G\in \mathbb{S}^\dimn} \Tr(M_0G) \\    
          & \mbox{subject to: } \Tr(G) = 1, \, G\succeq 0 \\
          & \hspace{0.62in} 
            \Tr(M_iG) \ge 0 \mbox{ for } i = 1, \ldots, L \\
          & \hspace{0.62in} 
            \mbox{Rank}(G) \le d
    \end{split}
  \end{align*}
  The rank constraint is satisfied due to the additional assumption
  that $d \ge \dimn$.  Hence the rank constraint can be removed to yield a
  convex, semidefinite program (SDP):
  \begin{align}
   \label{eq:primalOpt} 
   \begin{split}
   p_* = & \min_{G\in \mathbb{S}^\dimn} \Tr(M_0G) \\    
         & \mbox{subject to: } \Tr(G) = 1, \, G\succeq 0 \\
         & \hspace{0.62in} 
           \Tr(M_iG) \ge 0 \mbox{ for } i = 1, \ldots, L \\
   \end{split}
  \end{align}     
  The dual of this SDP is:
  \begin{align}
    \label{eq:dualOpt}
    \begin{split} 
    d_* = & \max_{\nu,\lambda_i \in \R} \nu \\
          & \mbox{subject to: } 
               \lambda_i \ge 0 \mbox{ for } i = 1, \ldots, L \\
          & \hspace{0.62in} 
            M_0 - \sum_{i=1}^L \lambda_i M_i \succeq \nu I_n
    \end{split}
  \end{align}
  The dual has a strictly feasible point, e.g. choose any
  $\lambda_i >0$ and $\nu$ sufficiently negative.  As a consequence,
  strong duality holds and the primal attains its optimal solution, see Section 5.9.1
  of \cite{boyd2004convex}. By Assumption \ref{a:rank}, the primal has a feasible point $G$ for
  which $G \succ 0$. Then by Proposition 6.3.2 in \cite{bertsekas1999nonlinear}, the dual problem
  attains its optimal solution.  
  As noted above,
  Statement 1 implies $p_*\ge 0$ and, by strong duality, $d_* \ge 0$.
  It follows that there exists
  $\lambda_i\ge 0$ and $\nu \ge 0$ such that
  $M_0 - \sum_{i=1}^L \lambda_i M_i \succeq \nu I_n$. Thus Statement 2
  holds.
\end{proof}



\subsection{Construction of a Worst-Case Counterexample}
\label{ss:wcexample}

Suppose Statement 2 in Lemma \ref{lem:losslessS} is false and $d \ge \dimn$. The proof of
Lemma~\ref{lem:losslessS} can be used to construct an $\eta\in
\R^{\dimn d}$
that demonstrates the falsity of Statement 1.  Specifically, if Statement 2 is false then
$M_0 - \sum_{i=1}^L \lambda_i M_i \nsucceq 0$ for all nonnegative
scalars $\{\lambda_1, \hdots, \lambda_L \}$. As a result the optimal
value of the dual problem \eqref{eq:dualOpt} satisfies $d_*<0$.
Moreover, strong duality implies $p_*=d_*<0$. Let $G_*\succeq 0$ denote
a corresponding optimal solution to the primal problem
\eqref{eq:primalOpt}. Perform a rank factorization
$G_* = B_*^\top B_*$ where $B_* \in \R^{d \times \dimn}$.  (This step may
require rows of zeros to be appended to $B_*$ to ensure it has row
dimension $d$.)  Denote the $i^{th}$ column of $B_*$ by $\eta_i$ and
define
$\eta_* := \bmtx \eta_1^\top & \eta_2^\top & \hdots & \eta_\dimn^\top
\emtx^\top$. The primal feasibility of $G_*$ implies that
$\|\eta_*\| = \Tr(G_*)=1$ and $\sigma_i(\eta_*) = \Tr(M_i G_*)\ge 0$ for
$i=1,\ldots,L$.  Moreover, $\sigma_0(\eta_*) = \Tr(M_0 G_*) = p_*<0$.
Thus $\eta_*$ is a specific vector demonstrating that Statement 1 is
false.

A key step in this construction is the numerical solution to the
primal problem \eqref{eq:primalOpt}.  This is computationally costly
if the number of constraints $L$ is large.  In some instances, a
primal optimal value $G_*$ can be obtained by first solving the dual,
and then solving the primal with only a subset of constraints.  In
particular, let $(\nu_*, \lambda_*)$ be any optimal solution to the
dual \eqref{eq:dualOpt} and define $\J := \{i \, : \,\lambda_{*,i} > 0\}$.
Define the following modified primal problem enforcing
only the subset of constraints given by $\J$:
\begin{align}
\label{eq:primalOptSmall}
  \begin{split}
    p_{*,\J}= & \min_{G\in \mathbb{S}^\dimn} \Tr(M_0G) \\    
    & \mbox{subject to: } \Tr(G) = 1, \, G\succeq 0 \\
    & \hspace{0.62in} \Tr(M_iG) \ge 0 \mbox{ for } i \in \J 
  \end{split}
\end{align}
The associated dual of this modified primal problem is:
\begin{align}
  \label{eq:dualOptSmall}
    \begin{split} 
     d_{*,\J} = & \max_{\nu,\lambda_i \in \R} \nu \\
          & \mbox{subject to: } 
                \lambda_i \ge 0 \mbox{ for } i \in \J \\
          & \hspace{0.62in}
           M_0 - \sum_{i \in \J} \lambda_i M_i \succeq \nu I_n
    \end{split}
\end{align}
This leads to the following result.

\begin{lemma}
\label{lem:fewconstraints}
If the modified primal problem \eqref{eq:primalOptSmall} has a unique
solution, $G_{*,\J}$, then this is also a solution to the original
primal problem \eqref{eq:primalOpt}.
\end{lemma}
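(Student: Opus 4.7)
The plan is to show two things: that the optimal values of the two primals coincide, $p_{*,\J}=p_*$, and that any optimal solution of the original primal is feasible for the modified primal. Once those are in hand, the uniqueness hypothesis on $G_{*,\J}$ forces the two optima to agree.

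First I would compare the dual values. Any feasible point of the modified dual \eqref{eq:dualOptSmall} extends, by setting $\lambda_i = 0$ for $i \notin \J$, to a feasible point of the original dual \eqref{eq:dualOpt} achieving the same $\nu$, so $d_{*,\J} \le d_*$. Conversely, by the very definition $\J := \{i : \lambda_{*,i} > 0\}$, any optimal $(\nu_*, \lambda_*)$ for the original dual has zero entries off of $\J$, and discarding those entries yields a feasible point of the modified dual with the same value $\nu_*$; hence $d_{*,\J} \ge d_*$. These give $d_{*,\J} = d_*$.

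Next I would invoke strong duality on both problems to transfer this equality to the primals. For the original pair this is already established inside the proof of Lemma~\ref{lem:losslessS}: the dual is strictly feasible (pick any positive $\lambda_i$ and sufficiently negative $\nu$), and Assumption~\ref{a:rank} supplies a primal strictly feasible point $G \succ 0$, so Slater's condition gives $p_* = d_*$ with attainment on both sides. The identical argument applies verbatim to the modified pair \eqref{eq:primalOptSmall}--\eqref{eq:dualOptSmall} because dropping constraints only enlarges the primal feasible set and removes variables from the dual without affecting strict feasibility. Therefore $p_{*,\J} = d_{*,\J} = d_* = p_*$.

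Finally, let $G_*$ be any optimizer of the original primal, which exists by the attainment just noted. Since the constraints of the modified primal form a subset of those of the original, $G_*$ is feasible for \eqref{eq:primalOptSmall}, and it attains objective value $p_* = p_{*,\J}$, making it optimal for the modified primal. The hypothesis that $G_{*,\J}$ is the unique optimizer of \eqref{eq:primalOptSmall} then forces $G_* = G_{*,\J}$, so $G_{*,\J}$ solves the original primal as claimed. The one delicate step is bridging from the modified problem back to the dropped constraints $i \notin \J$; this is accomplished entirely through the multiplier-based identity $d_{*,\J} = d_*$ combined with uniqueness, rather than by any direct verification that $\Tr(M_i G_{*,\J}) \ge 0$ for $i \notin \J$.
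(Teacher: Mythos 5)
Your proposal is correct and follows essentially the same route as the paper: establish $d_{*,\J}=d_*$ by zero-extension and restriction of dual multipliers, use strict dual feasibility (Slater) on both pairs to get $p_{*,\J}=d_{*,\J}=d_*=p_*$ with primal attainment, observe that any original-primal optimizer is feasible and hence optimal for the constraint-reduced primal, and conclude via uniqueness of $G_{*,\J}$. No gaps; the argument matches the paper's proof step for step.
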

\begin{proof}
  First note that the feasible set of the modified dual problem
  \eqref{eq:dualOptSmall} is a subset of the feasible set for the
  original dual problem \eqref{eq:dualOpt}.\footnote{Let
    $\{\lambda_i\}_{i \in \J}$ be feasible for \eqref{eq:dualOptSmall}. 
   Define $\hat{\lambda}_i:=\lambda_i$ if
    $i\in \J$, and $\hat{\lambda}_i=0$ otherwise.
    Then $\{\hat \lambda_i\}_{i=1}^{L}$ is feasible for
    the original dual problem.} Thus $d_*\ge d_{*,\J}$. Moreover, the
  optimal point $(v_*, \lambda_*)$ for \eqref{eq:dualOpt} is also
  feasible for the modified dual problem \eqref{eq:dualOptSmall}.
  This point achieves the cost $d_*$ and hence is also optimal for
  \eqref{eq:dualOptSmall}, i.e. $d_{*,\J}=d_*$. Next recall that
  strong duality holds for the original problems, i.e. $p_*=d_*$, as
  noted in the proof of Lemma~\ref{lem:losslessS}. Similarly, strong
  duality holds for the modified problems, i.e. $p_{*,\J}=d_{*,\J}$,
  because \eqref{eq:dualOptSmall} also has a strictly feasible
  point. It follows from these results that the two primal problems
  achieve the same cost $p_* = p_{*,\J}$.  Finally, the modified
  primal problem \eqref{eq:primalOptSmall} only has a subset of the
  constraints enforced for the original primal problem
  \eqref{eq:primalOpt}. Thus any optimal solution $G_*$ to the
  original primal problem \eqref{eq:primalOpt} is also optimal for the
  modified primal \eqref{eq:primalOptSmall}. In other words, the set
  of optimal points for the modified primal includes all optimal
  points for the original primal. By assumption, the modified primal
  problem has a unique optimal $G_{*,\J}$. Thus $G_{*,\J}$ is also
  optimal for the original primal problem.
\end{proof}

If the assumptions of Lemma~\ref{lem:fewconstraints} hold then a
worst-case counterexample can be constructed as follows.  First solve
the original dual problem to find the active dual variables
$\J := \{i\, : \, \lambda_{*,i} > 0\}$ for any optimal point.  Next
solve the modified primal problem \eqref{eq:primalOptSmall} to obtain
$G_{*,\J}$. If this solution is unique then $G_*=G_{*,\J}$.  The
remaining steps at the beginning of this section can be used to
construct the counterexample $\eta_*$.  The final technical result in
this section is a condition that can be used to verify if the modified
primal problem has a unique solution.  This follows from the
uniqueness and nondegeneracy results in \cite{alizadeh97}.

\begin{definition}
  Let $(\nu,\lambda)$ be any feasible point for the modified dual
  problem in \eqref{eq:dualOptSmall}. Perform the eigenvalue decomposition
  \begin{align}
    \label{eq:Zstar}
    M_0 - \sum_{i \in \J} \lambda_{i} M_i - \nu I = 
    \bmtx Q_1 & Q_2 \emtx \bmtx \Gamma & 0 \\ 0 & 0 \emtx
    \bmtx Q_1^\top \\ Q_2^\top \emtx,
  \end{align}
  where $\Gamma \in \R^{s \times s}$ is a diagonal matrix containing the
  nonzero eigenvalues. The point $(\nu,\lambda)$ is
  \underline{non-degenerate} if
  $\{I\} \cup \{Q_2^\top M_i Q_2\}_{i \in \J}$ spans
  $\mathbb{S}^{\dimn-s}$.
\end{definition}

\begin{lemma}
\label{lem:unique}
Let $(\nu_*, \lambda_*)$ be an optimal solution to
\eqref{eq:dualOptSmall} with $\lambda_{*,i}>0$ for $i\in \J$.
If this point is non-degenerate then
\eqref{eq:primalOptSmall} has a unique solution.
\end{lemma}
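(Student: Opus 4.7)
The plan is to recognize \eqref{eq:primalOptSmall}--\eqref{eq:dualOptSmall} as a primal--dual semidefinite programming pair and invoke the theorem of \cite{alizadeh97} that dual non-degeneracy of an SDP implies uniqueness of the primal optimum. The two things to check are (i) that the result, typically stated for equality-constrained SDPs, applies to our inequality-constrained problem, and (ii) that the spanning condition in the definition of non-degeneracy matches the one used by \cite{alizadeh97}.

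First I would handle the inequality-to-equality reduction. Because $\lambda_{*,i}>0$ for every $i\in\J$, complementary slackness at any optimal primal--dual pair forces $\Tr(M_i G_*)=0$ for all $i\in\J$. Hence the set of optima of \eqref{eq:primalOptSmall} coincides with the set of optima of the SDP obtained by replacing each inequality $\Tr(M_iG)\ge 0$ by the equality $\Tr(M_iG)=0$. It therefore suffices to prove uniqueness for this equality-constrained SDP, which is in the standard form treated in \cite{alizadeh97} with constraint matrices $\{I\}\cup\{M_i\}_{i\in\J}$ and right-hand side $(1,0,\ldots,0)$.

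Next I would verify that the definition given in the paragraph preceding the lemma is precisely AHO dual non-degeneracy for this SDP. With $Z_*=M_0-\sum_{i\in\J}\lambda_{*,i}M_i-\nu_* I$ factored as in \eqref{eq:Zstar}, the null space of $Z_*$ is the column span of $Q_2\in\R^{\dimn\times(\dimn-s)}$ with orthonormal columns. AHO's dual non-degeneracy condition requires that the projected constraint matrices $\{Q_2^\top I Q_2\}\cup\{Q_2^\top M_i Q_2\}_{i\in\J}$ span $\mathbb{S}^{\dimn-s}$, and since $Q_2^\top Q_2=I_{\dimn-s}$ this is exactly the hypothesis assumed. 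Applying the AHO theorem then yields that the equality-constrained SDP has a unique primal optimum, and by the first step this is also the unique optimum of \eqref{eq:primalOptSmall}.

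The main obstacle, and the only step requiring any care, is the passage from inequality to equality constraints; once this is justified via complementary slackness, uniqueness of $G_{*,\J}$ is a direct invocation of the cited SDP non-degeneracy result.
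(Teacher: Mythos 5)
Your proposal is correct, and it reaches the conclusion by a different route than the paper: you reformulate \eqref{eq:primalOptSmall} as an equality-constrained SDP in the standard form of \cite{alizadeh97} (using $\lambda_{*,i}>0$ and zero duality gap to force $\Tr(M_i G)=0$ at every primal optimum) and then invoke their theorem that a non-degenerate dual optimal solution implies uniqueness of the primal optimum, whereas the paper reproves that theorem directly in this setting: it combines the complementary slackness conditions \eqref{eq:compslack1}--\eqref{eq:compslack2} with Lemma 1 of \cite{alizadeh97} (the optimal $G_*$ and the dual slack share eigenvectors) to get $G_* = Q_2 U_2 Q_2^\top$, and then the spanning hypothesis turns $\Tr(U_2)=1$ and $\Tr\bigl((Q_2^\top M_i Q_2)U_2\bigr)=0$, $i\in\J$, into linear equations that determine $U_2$, hence $G_*$, uniquely. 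Your inequality-to-equality reduction is sound (every optimum of \eqref{eq:primalOptSmall} satisfies the equalities, and conversely), and your identification of the paper's non-degeneracy definition with AHO dual non-degeneracy is exactly right; what your route buys is brevity and an explicit link to the cited result, while the paper's route buys a self-contained argument that handles the inequality constraints without reformulation and shows precisely where the spanning condition enters. Two details worth one line each in your write-up: to apply the AHO theorem at $(\nu_*,\lambda_*)$ you should note that this point remains \emph{optimal} for the dual of the equality-constrained problem (immediate from weak duality, since its value $\nu_* = d_{*,\J} = p_{*,\J}$ equals the optimal value of the equality-constrained primal, whose feasible set drops the sign constraints on $\lambda$), and, like the paper, you implicitly presuppose that \eqref{eq:primalOptSmall} attains its optimum, which follows from the strict feasibility of \eqref{eq:dualOptSmall} noted in the proof of Lemma~\ref{lem:fewconstraints}.
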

\begin{proof}
  Let $G_*$ and $(\nu_*,\lambda_*)$ denote optimal solutions to the
  modified primal/ dual problems (dropping the subscript $\J$ to
  simplify the notation).  These must satisfy the following
  complementary slackness conditions, as discussed in Section 5.5.2
  of \cite{boyd2004convex}:
  \begin{align}
    \label{eq:compslack1}
    & \lambda_{*,i} \Tr(M_i G_*) = 0 \mbox{ for } i \in \J \\
    \label{eq:compslack2}
    &  Z_* G_* = 0         
  \end{align}
  where $Z_* = M_0 - \sum_{i \in \J} \lambda_{*,i} M_i - \nu_* I$.
  Let $([Q_1, Q_2],\Gamma)$ be an eigendecomposition of $Z_*$ as in
  Equation~\ref{eq:Zstar}. By Lemma 1 in \cite{alizadeh97}, $G_*$ and
  $Z_*$ share eigenvectors. Thus there exists a $U_1 \in \mathbb{S}^s$ and 
  $U_2 \in \mathbb{S}^{\dimn-s}$ such that
  \begin{align*}
    G_* = \bmtx Q_1 & Q_2 \emtx \bmtx U_1 & 0 \\ 0 & U_2 \emtx 
          \bmtx Q_1^\top \\ Q_2^\top \emtx
  \end{align*}
  To achieve $Z_* G_* = 0$, we must have $\Gamma U_1 = 0$. The
  eigenvalues in $\Gamma$ are assumed to be non-zero. It follows
  that $U_1=0$ and hence
   $G_* = Q_2 U_2 Q_2^\top$.
  Primal feasibilty of $G_*$ implies $1=\Tr(G_*) = \Tr(U_2)$. In
  addition, the complementary slackness conditions in
  Equation~\ref{eq:compslack1} combined with $\lambda_{i,*}>0$ imply
  that $0=\Tr(M_i G_*)  = \Tr(Q_2^\top M_i Q_2 U_2)$ for $i\in\J$. 
  These conditions are summarized as 
  \begin{align*}
    & \Tr((Q_2^\top M_i Q_2) U_2) = 0 \mbox{ for } i \in \J \\
    & \Tr((I) U_2) = 1
  \end{align*}
  The non-degeneracy assumptions impies that these conditions
  uniquely define $U_2$ and hence $G_*$.
\end{proof}

\subsection{Conic Combinations of Cyclic Monotonicity Constraints}
\label{ss:coniccombs}
%

The technical results in the previous subsections can be used to
compute the worst case $N$-step performance bound. This will be
described in detail in the next section.  One issue is that the number
of cyclic monontonicity constraints scales with $N!$.  This subsection
provides a final technical result to alleviate this computational
growth. Specifically, it is shown that conic combinations of cyclic
monotonicity conditions may be written using doubly hyperdominant
matrices.

%

To see that this is so, first consider the set of data
$\{(y_i,u_i)\}_{i=0}^{R-1}$. By Lemma~\ref{lem:S0infinterp}, this data
is $S_{0,\infty}$ interpolable if and only if it is cyclically
monotone.  It is useful to slightly reformulate of the cyclic
monotonicity conditions.  The set $\{(y_i,u_i)\}_{i=0}^{R-1}$ is
cyclically monotone if and only if the following inequality holds for
any permutation $\{i_0, i_1, \hdots, i_{R-1}\}$ of the indices
$\{0,\ldots,R-1 \}$:
\begin{align}
  \sum_{j = 0}^{R-1} u_j^\top (y_j - y_{i_j}) \ge 0
\end{align}
Define the stacked data
$U_{0:R-1}:= \bmtx u_0^\top,\ldots,u_{R-1}^\top \emtx$ and similarly
for $Y_{0:R-1}$.  The data is cyclically monotone if and only if the
following constraints are satisfied for each $R\times R$ permutation
matrix $\{P_i\}_{i=1}^{R!}$:
\begin{align}
  \label{eq:cycmonoquad}
  U_{0:R-1}^\top  \, ((I - P_i) \otimes I_d) \, Y_{0:R-1}  \geq 0 
\end{align}
Next define the following set:
\begin{align*}
  \M_R := \left \{\sum_{i=1}^{R!} \lambda_i  (I - {P}_i) \, : \,\lambda_i
  \ge 0 \right\}.
\end{align*}
Any conic combination of the cyclic monotonicity constraints 
has the following form for some $W \in \M_R$:
\begin{align}
  U_{0:R-1}^\top \, (W \otimes I_d) \, Y_{0:R-1}  \geq 0 
\end{align}
Such conic combinations can be equivalently written with doubly
hyperdominant matrices as defined next.
\begin{definition}
  \label{def:dhd}
  A matrix is \underline{doubly hyperdominant}  if the
  off diagonal elements are nonpositive, and both the row sums and
  column sums are nonnegative. A matrix is \underline{doubly
    hyperdominant with zero excess} if it is doubly hyperdominant, and
  both the row sums and column sums are zero.
\end{definition}
Let $\Hy_R$ denote the set of $R\times R$ doubly hyperdominant
matrices.  The subset of doubly hyperdominant matrices with zero
excess is denoted $\Hy_R^0$.

\begin{lemma}
  \label{lem:dhd}
  The set $\Hy_R^0$ is equal to the set $\M_R$.
\end{lemma}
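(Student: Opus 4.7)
The plan is to establish the two set inclusions $\mathcal{M}_R \subseteq \mathcal{H}_R^0$ and $\mathcal{H}_R^0 \subseteq \mathcal{M}_R$ separately. The first direction is routine; the second is the real content, and it reduces to an application of the Birkhoff--von Neumann theorem after a careful choice of diagonal completion.

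For $\mathcal{M}_R \subseteq \mathcal{H}_R^0$, I would take a generic element $W = \sum_i \lambda_i (I - P_i)$ with $\lambda_i \geq 0$ and verify the three defining properties of $\mathcal{H}_R^0$ directly. Since each $P_i$ is a permutation matrix with entries in $\{0,1\}$, every $I - P_i$ has entries in $\{0,1\}$ on the diagonal and entries in $\{-1,0\}$ off the diagonal, and each row and column of $I - P_i$ sums to zero. These properties are preserved under nonnegative linear combinations, so $W \in \mathcal{H}_R^0$.

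The harder direction is $\mathcal{H}_R^0 \subseteq \mathcal{M}_R$. Let $W \in \mathcal{H}_R^0$. Because off-diagonal entries of $W$ are nonpositive and each row sums to zero, the diagonal entries $W_{ii}$ are nonnegative. The key construction is to form the matrix $\tilde{N}$ defined by $\tilde{N}_{ij} = -W_{ij}$ for $i \neq j$ and $\tilde{N}_{ii} = c - W_{ii}$, where $c := \max_i W_{ii}$. By construction $\tilde{N} \geq 0$, and a short calculation using the zero-excess property of $W$ shows that every row sum and every column sum of $\tilde{N}$ equals $c$. If $c = 0$ then $W = 0 \in \mathcal{M}_R$ trivially; otherwise $\tilde{N}/c$ is doubly stochastic. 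By the Birkhoff--von Neumann theorem, $\tilde{N}/c = \sum_k \mu_k P_k$ with $\mu_k \geq 0$ and $\sum_k \mu_k = 1$, so setting $\lambda_k := c\mu_k$ gives $\sum_k \lambda_k = c$ and $\sum_k \lambda_k P_k = \tilde{N}$. Then
\begin{equation*}
\sum_k \lambda_k (I - P_k) = cI - \tilde{N},
\end{equation*}
whose $(i,i)$ entry is $c - (c - W_{ii}) = W_{ii}$ and whose $(i,j)$ entry for $i \neq j$ is $-\tilde{N}_{ij} = W_{ij}$. Thus $W \in \mathcal{M}_R$.

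The step that requires the most care is the choice of the diagonal completion: one has to pick a common value $c$ large enough that $\tilde{N}_{ii} = c - W_{ii} \geq 0$ for every $i$, yet such that the row and column sums of $\tilde{N}$ come out uniform so that Birkhoff--von Neumann applies. Taking $c = \max_i W_{ii}$ achieves both. Everything else is an accounting check, but this diagonal padding is the combinatorial trick that makes the decomposition work.
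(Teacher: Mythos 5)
Your proof is correct and follows essentially the same route as the paper: a direct entrywise check for $\M_R \subseteq \Hy_R^0$, and for the converse the shift-and-normalize construction $cI - W$ (the paper uses any $r$ strictly larger than the diagonal entries, which avoids your separate $c=0$ case) followed by the Birkhoff--von Neumann decomposition and rescaling.
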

\begin{proof}
  Take any $W \in \M_R$ so that, by definition, there exists
  nonnegative $\{\lambda_i\}_{i=1}^{R!}$ such that:
  \begin{align}
    \label{eq:Mconic}
    W = \sum_{i=1}^{R!} \lambda_i (I - {P}_i) 
  \end{align}
  Each term $\lambda_i (I-P_i)$ has nonpositive off-diagonal entries
  and row/colums that sum to zero.  Thus the sum in
  Equation~\ref{eq:Mconic} is doubly hyperdominant with zero excess,
  i.e. $\M_R \subseteq \Hy_R^0$.

  Next take any $H\in \Hy_R^0$.  It follows from Theorem 3.7 in
  \cite{willems1970analysis} that $H \in \M_R$. In particular, let $r$
  be any constant greater than the diagonal elements of $H$. Then
  $H = r[I - S]$ where $S:=\frac{1}{r}(rI - H)$ has all nonnegative
  entries with row/column sums equal to $1$. $S$ is a doubly
  stochastic matrix and hence it can be decomposed as a convex
  combination of permutation matrices. This is the
  Birkhoff/von-Neumann decomposition \cite{birkhoff1946tres}. In other words, there exist permutation
  matrices $\{P_i\}_{i=1}^k$ and nonnegative $\{\alpha_i\}_{i=1}^k$
  such that $\sum_{i=1}^k \alpha_i = 1$ and
  $S = \sum_{i = 1}^k \alpha_i P_i$. The Birkhoff algorithm
  \cite{brualdi1982birkhoff} provides one specific decomposition.
  This decomposition can be performed with no more than
  $k \le R^2 - 2R +1$ terms. Define the nonnegative scalars
  $\lambda_i := r \alpha_i$ to obtain the decomposition
  $H= \sum_{i =1}^k \lambda_i (I - P_i)$. Thus $H\in \M_R$.
\end{proof}


\section{Performance Bound}
\label{s:performance}
\subsection{Formulation}

The technical results in the previous section are now used to compute
the worst case N-step performance bound.  Consider the unconstrained
minimization of  $f:\R^d \mapsto \R$ as in \eqref{eq:unconstrained}.
As noted earlier, we assume without loss of generality that $y_*=0$ is
the optimal point.  Let $\{A_k\}_{k=0}^{N-1}, \{B_k\}_{k=0}^{N-1}$,
and $\{C_k\}_{k=0}^{N}$ define a time-varying algorithm of the form
\eqref{eq:LTIAlg}.  Moreover, let
$\{x_k\}_{k=0}^{N}, \{u_k\}_{k=0}^{N-1}$, and $\{y_k\}_{k=0}^{N}$ be
the sequence of iterates generated by this algorithm starting from the
intial state $x_0$. The worst-case, $N$-step performance bound on
$S_{m,L}$ is the smallest value of $b$ such that
$\|y_N \| \leq b \|x_0\|$ holds for any $f\in S_{m,L}$.

Each iterate in the finite horizon sequences can be expressed as a
linear combination of $x_0$ and $\{u_k\}_{k=0}^{N-1}$. Define the
vector
$\eta := \bmtx x_0^\top & U_{0:N-1}^\top \emtx^\top \in \R^{n_\eta}$
with dimension $n_\eta:= (N+n)d$.  Let $R_0$ and $R_1$ be matrices
that define the following mapppings:
\begin{align}
\label{eq:R0R1}
\begin{split}
&  x_0 := (R_0 \otimes I_d) \eta, \\
&  y_N := (R_1 \otimes I_d) \eta,  
\end{split}
\end{align}
For example
$R_0:=\bmtx I_n & 0_{n \times N} \emtx \in \R^{n\times (N+n)}$.  The
matrix $R_1$ can be constructed from the state
matrices of the LTV algorithm. 

The performance bound can be expressed in terms of the matrices
defined in Equation~\ref{eq:R0R1}.  Define the quadratic function
$\sigma_0 : \R^{n_\eta} \times \R \rightarrow \R$ by
$\sigma_0(\eta,b) := \eta^\top (M_0(b) \otimes I_d) \eta$ where
$M_0(b):= b^2 R_0^\top R_0 - R_1^\top R_1$. By the definitions
in \ref{eq:R0R1}, 
$\sigma_0(\eta,b) = b^2 \|x_0\|^2 - \|y_N\|^2$. Hence
the bound $\|y_N\| \le b \|x_0\|$ is satisfied if and only
if $\sigma_0(\eta,b) \ge 0$.


Similarly, quadratic functions can be defined to encode the cyclic
monotonticity constraints.  By Lemma~\ref{lem:SmLinterp}, the data
$\{u_k\}_{k=0}^{N-1}$ and $\{y_k\}_{k=0}^{N-1}$ is $S_{m,L}$
interpolable if and only if $\{ (L y_i-u_i, u_i-my_i) \}_{i=0}^{N-1}$
is cyclically monotone. As noted earlier, we consider, without loss of
generality, the case where $y_* = 0$ is the optimal point. This occurs
when $\nabla f(0) = 0$. Thus all functions under consideration have
gradients that interpolate $(0,0)$. The cyclic monotonicity
conditions, including the point $(0,0)$, are thus given by:
\begin{align*}
 & (LY_{0:N-1}-U_{0:N-1})^\top  (Q_i \otimes I_d) (U_{0:N-1}-mY_{0:N-1}) \ge 0 \\
 & \mbox{where }  Q_i:= \bsmtx 0 \\ I \esmtx^\top  \, (I - P_i)
    \,  \bsmtx 0 \\ I \esmtx
\end{align*}
The additional blocks of zeros account for the point $(0,0)$.  This
inequality must hold for all $(N+1)\times (N+1)$ permutation matrices
$P_i$.  Let $R_{2L}$ and $R_{2m}$ be matrices that
define the following mappings:
\begin{align}  
\label{eq:R2}
\begin{split}
  & LY_{0:N-1}-U_{0:N-1}  := (R_{2L}\otimes I_d) \, \eta \\
  & U_{0:N-1}-mY_{0:N-1}  := (R_{2m}\otimes I_d)  \,  \eta 
\end{split}
\end{align}
Define the quadratic functions
$\sigma_i : \R^{n_\eta} \rightarrow \R$ by
$\sigma_i(\eta)= \eta^\top (M_i\otimes I_d) \eta$ where
\begin{align*}
  M_i:= & \frac{1}{2} \bmtx R_{2L} \\ R_{2m} \emtx^\top 
     \bmtx 0 & Q_i \\ Q_i^\top  & 0 \emtx 
   \bmtx R_{2L} \\ R_{2m} \emtx \\
\nonumber
  & \mbox{ for } i=1,\ldots,(N+1)!
\end{align*}
The cyclic monotonicity conditions are thus equivalent to
$\sigma_i(\eta)\ge 0$ for $i=1,\ldots,(N+1)!$.

A quadratic function of arbitrary dimension dimension can be constructed to verify that
 Assumption 1 holds for these conditions. The construction is based on an example due
 to Nesterov \cite{nesterov2013introductory} and details can be found in Appendix B of
 \cite{taylor2018lyapunov}.
 
\subsection{Worst Case Performance}
\label{ss:wcb}

We now state the main result which supplies a means to calculate 
the $N$-step worst case performance bound.

\begin{theorem}
  \label{thm:wcb}
  The worst case $N$-step performance of 
  the algorithm defined by $\{A_k\}_{k=0}^{N-1}$,
      $\{B_k\}_{k=0}^{N-1}$, $\{C_k\}_{k=0}^{N}$
       on $S_{m,L}$ is given by the optimal value to 
  \begin{align}
    \label{eq:wcb}
     b_*& := \min_{H \in \Hy_{N}, b \in \R} b \\
    \nonumber
     & \mbox{subject to: } 
       M_0(b) - \bsmtx R_{2L} \\ R_{2m} \esmtx^\top 
    \bsmtx 0 & H \\ H^\top & 0 \esmtx 
    \bsmtx R_{2L} \\ R_{2m} \esmtx \succeq 0.
  \end{align} 
\end{theorem}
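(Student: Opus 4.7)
The plan is to translate the $N$-step performance bound into a quadratic implication, apply the lossless S-procedure from Lemma~\ref{lem:losslessS}, and then invoke Lemma~\ref{lem:dhd} to reparameterize the resulting multiplier variables as a single doubly hyperdominant matrix $H \in \Hy_N$.

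First, I would argue that $b$ is a valid performance bound if and only if
\[
  \sigma_i(\eta) \ge 0 \text{ for } i = 1,\ldots,(N+1)! \;\implies\; \sigma_0(\eta,b) \ge 0
\]
for every $\eta \in \R^{n_\eta}$. Since $\sigma_0(\eta,b) = b^2\|x_0\|^2 - \|y_N\|^2$, the conclusion is just the desired inequality. The hypothesis is exactly the set of $\eta$ that can be realized by some $f \in S_{m,L}$ with $\nabla f(0)=0$: any such $f$ produces the data $\{(y_k,u_k)\}_{k=0}^{N-1} \cup \{(0,0)\}$, which by Lemma~\ref{lem:SmLinterp} is $S_{m,L}$-interpolable precisely when the $(m,L)$-transformed tuples are cyclically monotone, i.e.\ $\sigma_i(\eta) \ge 0$ for every $(N+1)\times(N+1)$ permutation $P_i$. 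Conversely, feasibility of these inequalities manufactures an interpolating $f \in S_{m,L}$, yielding a realized trajectory. The degenerate case $\eta=0$ is trivial.

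Next I would invoke Lemma~\ref{lem:losslessS}; the rank hypothesis of Assumption~\ref{a:rank} is verified for this family of constraints by the Nesterov-style example cited in the paragraph preceding the theorem. The S-procedure converts the implication into the existence of nonnegative scalars $\lambda_i$ with $M_0(b) - \sum_{i=1}^{(N+1)!} \lambda_i M_i \succeq 0$. Factoring the common $\bsmtx R_{2L}\\ R_{2m}\esmtx^\top(\cdot)\bsmtx R_{2L}\\ R_{2m}\esmtx$ out of the sum rewrites the constraint as
\[
  M_0(b) - \bsmtx R_{2L}\\ R_{2m}\esmtx^\top \bsmtx 0 & H \\ H^\top & 0 \esmtx \bsmtx R_{2L}\\ R_{2m}\esmtx \,\succeq\, 0,
\]
with $H = \frac{1}{2}\sum_i \lambda_i Q_i$. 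Because $Q_i$ is the bottom-right $N\times N$ block of $(I-P_i)$, the matrix $H$ is, up to the positive factor $\frac{1}{2}$, the bottom-right block of $\sum_i \lambda_i(I-P_i) \in \M_{N+1}$, which by Lemma~\ref{lem:dhd} equals $\Hy_{N+1}^0$.

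The main technical step I expect to require care is the identification of this set of bottom-right blocks with $\Hy_N$. One direction is direct: if $W\in\Hy_{N+1}^0$ is partitioned as $\bsmtx w_{00} & w_0^\top \\ \tilde w_0 & W_{11}\esmtx$, then $W_{11}$ inherits nonpositive off-diagonals from $W$, while $W\Ones=0$ with $\tilde w_0\le 0$ forces $W_{11}\Ones=-\tilde w_0\ge 0$; the column sums are handled symmetrically, so $W_{11}\in\Hy_N$. For the reverse inclusion, given $H\in\Hy_N$ the explicit extension $W=\bsmtx \Ones^\top H\Ones & -\Ones^\top H \\ -H\Ones & H\esmtx$ lies in $\Hy_{N+1}^0$: the off-diagonal sign conditions follow from the nonnegative row and column sums of $H$, and the row and column sums of $W$ vanish by construction. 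Combining this identification with the LMI displayed above and minimizing over $b$ reproduces the optimization in \eqref{eq:wcb}.
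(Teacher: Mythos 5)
Your proposal is correct and follows essentially the same route as the paper's proof: recast the bound as a quadratic implication over the cyclic-monotonicity constraints (with the point $(0,0)$ appended), apply the lossless S-procedure of Lemma~\ref{lem:losslessS} with Assumption~\ref{a:rank} checked via the Nesterov-type construction, and use Lemma~\ref{lem:dhd} to rewrite the conic combination of the $Q_i$ as a doubly hyperdominant matrix with the row/column for the optimal point removed. The only difference is that you verify both inclusions when identifying the trailing $N\times N$ blocks of matrices in $\Hy_{N+1}^0$ with $\Hy_N$, whereas the paper's proof states only the deletion direction and defers the extension $H\mapsto\hat H$ (appending a row/column equal to the negated column/row sums) to Section~\ref{ss:nonasymptotictraj}; this is a welcome bit of added rigor rather than a different approach.
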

\begin{proof}
  Recall that we may, without loss of generality, consider functions
  achieving their minimum at the origin. The worst-case $N$-step
  performance problem is to find the smallest $b$ such that
  $\|y_N\| \leq b \|x_0\|$ whenever $N$ steps of algorithm
  \eqref{eq:LTIAlg} are run on $f\in S_{m,L}$ from the intial state
  $x_0 \in \R^{nd}$.  Running \eqref{eq:LTIAlg} on functions
  $f \in S_{m,L}$ will generate sequences of iterates
  $\{(y_i,u_i)\}_{i=0}^{N-1}$ that are $S_{m,L}$ interpolable. It
  follows from Lemma~\ref{lem:SmLinterp} and the notation above that
  the iterates are $S_{m,L}$ interpolable if and only if
  $\sigma_i(\eta) \ge 0$ for $i=1,\ldots,(N+1)!$. 
  Likewise, $\|y_N\| \leq b \|x_0\|$ is equivalent to
  $\sigma_0(\eta,b)\ge 0$.  

  Thus the $N$-step worst case performance is given by the solution to
  the following optimization:
  \begin{align*}
  \begin{split} 
     & \min_{b \in \R} b \\
     & \mbox{subject to: }
     \sigma_0(\eta,b) \geq 0 \mbox{ for all } \eta \in \R^{n_\eta}
     \mbox{ satisfying} \\  
     & \hspace{0.62in} \sigma_i(\eta) \geq 0 \mbox{ for } i =
     1 \hdots (N+1)!
  \end{split}
  \end{align*} 
  By Lemma \ref{lem:losslessS}, this is equivalent to 
  \begin{align*}
  \begin{split} 
     & \min_{b \in \R, \lambda_i \in \R} b \\
          & \mbox{subject to: } \lambda_i \geq 0 \mbox{ for } i =
                  1,\hdots, (N+1)! \\
          & \hspace{0.62in} M_0(b) - \sum_{i=1}^{(N+1)!} \lambda_i M_i
          \succeq 0
  \end{split}
  \end{align*}
  The conic combinations of $M_i$ may be expressed in terms of the set
  $\M_N$ to express the above problem as
  \begin{align*}
  \begin{split} 
     & \min_{b \in \R, W \in \M_N} b \\
     & \mbox{subject to:} \\
     &   M_0(b) - \bsmtx R_{2L} \\ R_{2m} \esmtx^\top 
       \bsmtx 0 &  \bsmtx 0 \\ I \esmtx^\top W \bsmtx 0 \\ I \esmtx 
       \\ \bsmtx 0 \\ I \esmtx^\top W^\top \bsmtx 0 \\ I \esmtx & 0 \esmtx 
       \bsmtx R_{2L} \\ R_{2m} \esmtx \succeq 0.
  \end{split}
  \end{align*}  
  Apply Lemma~\ref{lem:dhd} to replace $W\in \M_N$ with
  $\hat{H}\in \Hy_N^0$. The term
  $\bsmtx 0 \\ I \esmtx^\top \hat{H} \bsmtx 0 \\ I \esmtx$ eliminates
  the first row and column of $H$. Let $H$ be the sub-matrix obtained
  by removing the first row and column of $\hat{H}$.  Then $H$ is
  doubly hyperdominant but possibly with excess, i.e.  $H\in \Hy_N$.
  This leads to the formulation in \eqref{eq:wcb}.
\end{proof}

Note that \eqref{eq:wcb} is a semidefinite program (SDP) in the
variables $b$ and $H\in \Hy_N$.  The number of independent variables
in the doubly hyperdominant matrix scales with $N^2$ even though the
cyclic monotonicity conditions involve $N!$ constraints.  This SDP 
can be efficiently solved (for moderate horizon lengths) using freely
available software.


\subsection{Worst Case Trajectory}
\label{ss:nonasymptotictraj}

The bound computed by Theorem~\ref{thm:wcb} is exact. In particular,
let $b^*$ be the optimal value found from Theorem~\ref{thm:wcb}.
Select any $b\in(0,b^*)$ and $d \geq N+n$.  There is a function
$f \in S_{m,L}$ achieving its minimum at $y_*=0$ and an initial state
$x_0$ such that the final value of the algorithm has
$\|y_N\| > b \|x_0\|$.  

The procedure in Section~\ref{ss:wcexample} can be used to construct a
feasible sequence of iterates iterpolable by such a function. In
particular, if $b<b^*$ then
$M_0 - \sum_{i=1}^L \lambda_i M_i \nsucceq 0$ for all nonnegative
scalars $\{\lambda_1, \hdots, \lambda_L \}$. As a result, the
construction at the beginning of Section~\ref{ss:wcexample}  provides
an $\eta \neq 0$ with $\sigma_i(\eta) \geq 0$ for
$i = 1, \hdots, (N+1)!$ and $\sigma_0(\eta) < 0$.  This vector can be
mapped to a sequence $\{(y_i, u_i)\}_{i=0}^{N-1}$ which is
interpolable by an $f \in S_{m,L}$ with optimal value at $y_*=0$.
Furthermore, $\sigma_0(\eta) < 0$ imples  $\|y_N\| > b    \|x_0\|$.

This procedure requires the solution to the optimization problem
\eqref{eq:primalOpt}. As noted in Section~\ref{ss:wcexample}, the
number of constraints grows factorially with $N$.  The remainder of
Section~\ref{ss:wcexample} provides a method that scales quadratically
with $N$. First let $(\nu_*, H_*)$ be optimal for the following problem:
\begin{align}
  \label{eq:violatingxdhd}
  \begin{split} 
    & \max_{\nu \in \R, H \in \Hy_{N}} \nu \\
    & \mbox{subject to: } 
    M_0(b) - \bsmtx R_{2L} \\ R_{2m} \esmtx^\top 
    \bsmtx 0 & H \\ H^\top & 0 \esmtx 
    \bsmtx R_{2L} \\ R_{2m} \esmtx \succeq  \nu I_n
  \end{split}
\end{align}
Append an additional row/and column to $H_*$ to obtain the
corresponding matrix with zero excess, $\hat{H}_*$ such that:
$H_* = \bsmtx 0 \\
I_R \esmtx^\top \hat{H}_* \bsmtx 0 \\ I_R \esmtx$. This may be done by
setting the elements of the added row/column to the negation of the
corresponding column/row sum respectively. Decompose $\hat{H}_*$ as
described in the proof of Lemma~\ref{lem:dhd}, and denote
the coefficients for the decomposition by $\lambda_*$.  Then
$(\nu_*, \lambda_*)$ solve \eqref{eq:dualOpt}.  Furthermore, if we let
$\J = \{i| \lambda_{*,i} > 0\}$ and $(\nu_*, \lambda_*)$ is
non-degenerate, then a solution to \eqref{eq:primalOptSmall} is also a
solution to \eqref{eq:primalOpt}. As $\J$ has at most
$(N+1)^2 - 2(N+1) +1$ elements, the number of constraints in
\eqref{eq:primalOpt} scales quadratically with $N$.

\section{Numerical Results}
\label{s:numerical}

Theorem \ref{thm:wcb} provides an approach to calculate the $N$-step
worst case performance $b_*$ of an algorithm on $S_{m,L}$. As noted
previously, the optimization problem \eqref{eq:wcb} is a semidefinite
program and can be efficiently solved for moderate
horizons. Section~\ref{ss:nonasymptotictraj} provides a method to
construct specific worst-case trajectories that are arbitrarily close
to $b_*$. The code to compute the worst case performance as well as to
find worst case trajectories will be available
Github\footnote{\url{https://github.com/BruceDLee/nonasymptoticOptimizationConvergence}}.
The code was tested, in part, by verifying that the performance bounds
attained match those found using the Performance Estimation Toolkit
\cite{taylor2017performance} which is based on the results in
\cite{taylor2016smooth}.

The method was used to compute the $N$-step performance bounds
for the heavy-ball (Equations~\ref{eq:hb1} and \ref{eq:hb2}).
The algorithm parameters were selected to optimize performance
on quadratic functions:
\begin{align*}
    \alpha := \frac{4}{(\sqrt{L}+ \sqrt{m})^2} 
\mbox{ and }
    \beta :=  \left(\frac{\sqrt{L}-\sqrt{m}}{ \sqrt{L}+\sqrt{m}} \right)^2 
\end{align*}
Figure~\ref{fig:hb} shows the $N$-step performance bounds for several
values of $N$ (blue-x). The left plot is for $m=1, L=10$ and the right
plot is for $m=1, L=30$.  This algorithm is known to converge
asymptotically to the optimal value if and only if the condition ratio
satisfies $ \frac{L}{m} <9+4\sqrt{5}$ \cite{badithela2019heavy}.  The
left subplot corresponds to a condition ratio below this boundary and
the finite-step bounds decay, as expected.  The right subplot
corresponds to a condition ratio above this boundary and the
finite-step bounds increase, as expected.
\begin{figure}
\begin{center}
\includegraphics[width = 0.45\textwidth]{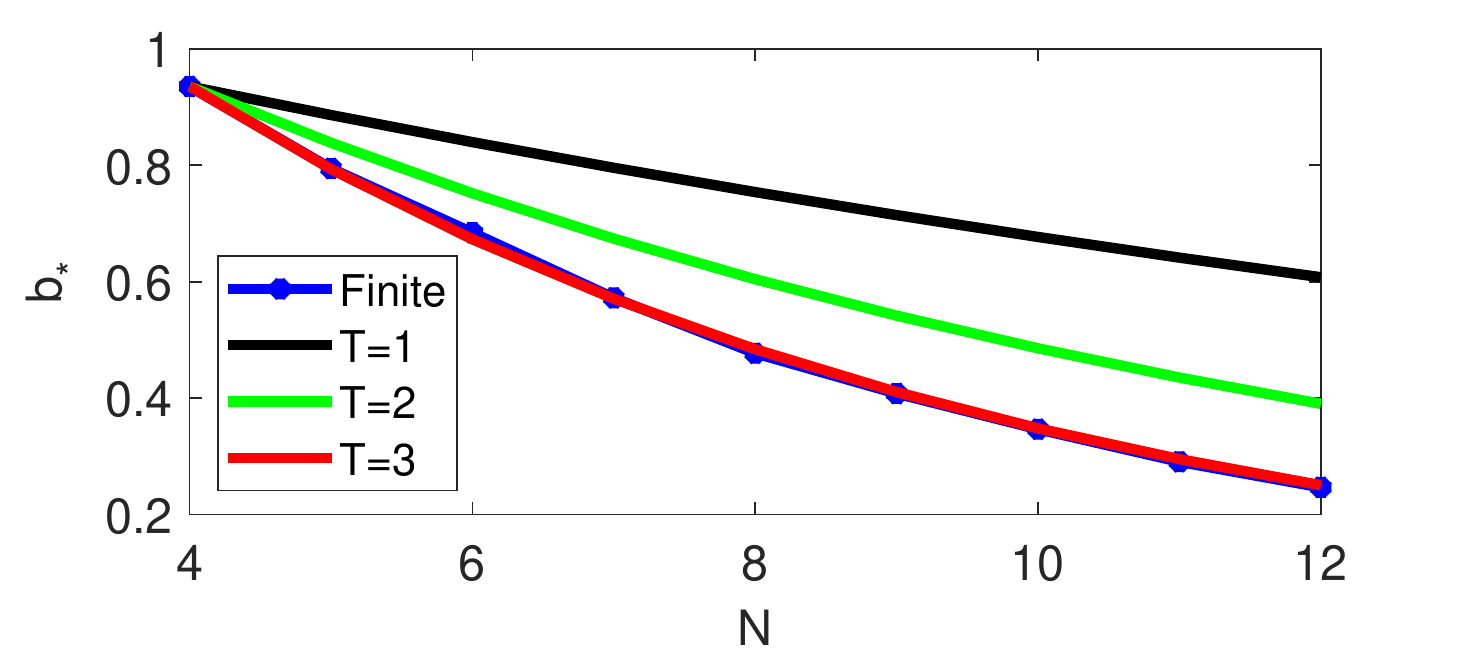}
\includegraphics[width = 0.45\textwidth]{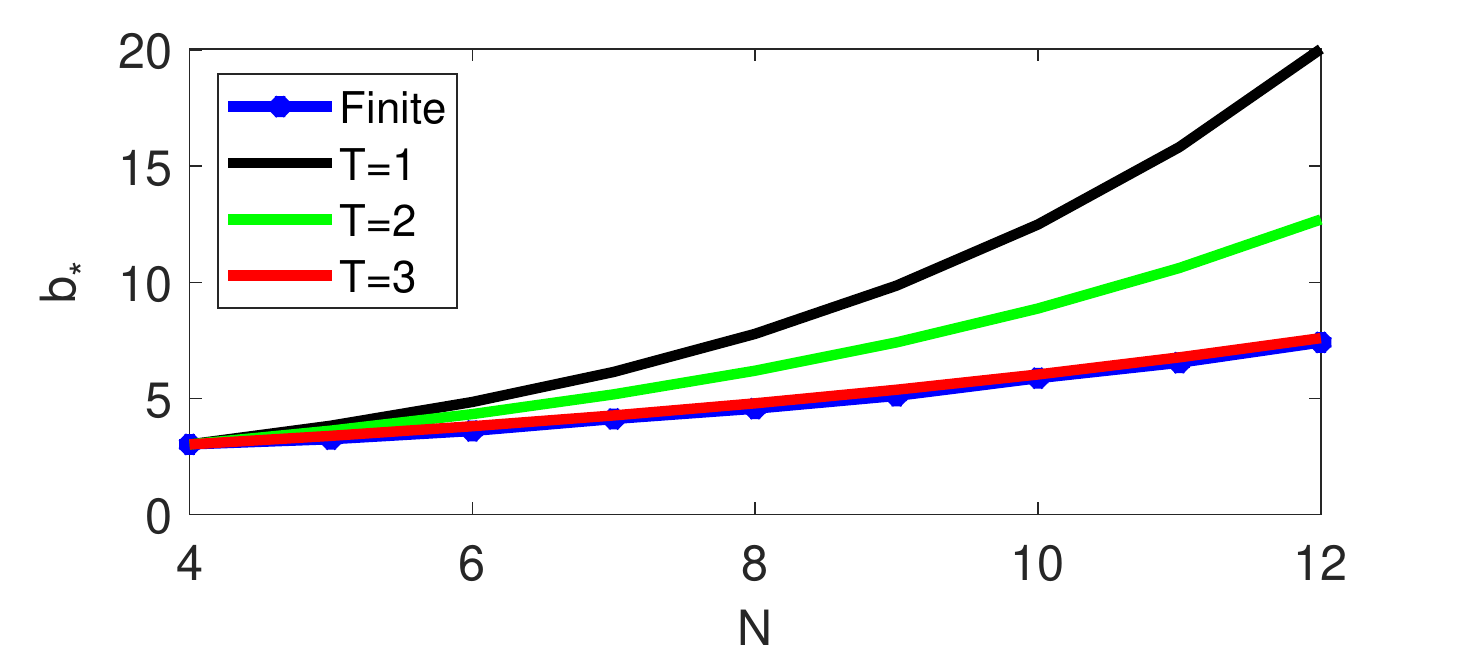}
\end{center}
\caption{Finite Step and Asymptotic $S_{m,L}$ Performance Bounds on 
        Heavy Ball Algorithm Optimized for Quadratics. Top
        plot is for $m=1$, $L=10$ and bottom plot is for
        $m=1$ and $L=30$.}
\label{fig:hb}
\end{figure}
It is interesting to compare the $N$-step performance bounds
with estimates of the asymptotic convergence rate. One asymptotic
result is briefly presented as it provides a baseline for comparison.
\begin{theorem}
  \label{thm:asymptotic}
  Consider an LTI algorithm defined by $(A,B,C)$ run on $f \in
  S_{m,L}$ achieving its minimum at the origin.
  For a positive integer $T$, define 
  \begin{align}
    \label{eq:auxseq}
    \eta_k &:= \bmtx x_{k-T} \\ U_{(k-T):(k-1)} \emtx \\ 
    z_k &:= \bmtx LY_{(k-T):(k-1)} - U_{(k-T):(k-1)} \\
    -mY_{(k-T):(k-1)} + U_{(k-T):(k-1)}\emtx
  \end{align}
  Let $(\hat{A} \otimes I_d, \hat{B} \otimes I_d, \hat{C} \otimes I_d,
                  \hat{D} \otimes I_d)$ define the state space
  system with input $u_k$, state $\eta_k$, and output $z_k$. 
  Define the optimization:
  \begin{align*}
    \rho_* & := \min_{\rho \in \R, P \in \mathbb{S}^{T+n}, H \in \Hy_{T}} \rho \\
         & \mbox{subject to:} \hspace{0.1in} P \succ 0 \\
         & 
         \hspace{0.62in} \bmtx \rho^2 P & 0 \\ 0 & 0 \emtx 
          -\bmtx \hat{A}^\top \\ \hat{B}^\top \emtx P \bmtx \hat{A}^\top \\ \hat{B}^\top \emtx^\top \\
          & \hspace{0.72in} - \bmtx \hat{C}^\top \\ \hat{D}^\top \emtx
           \bmtx 0 & H \\ H^\top & 0 \emtx 
            \bmtx \hat{C}^\top \\ \hat{D}^\top \emtx^\top \succeq 0 
  \end{align*} 
  Then there exists a constant $c>0$ such that for all 
  $k$, $\|y_k\| \leq c \rho_*^k \|\eta_0\|$. 
\end{theorem}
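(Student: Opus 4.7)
The plan is to read the matrix inequality in the statement as a Lyapunov dissipation inequality for the augmented system $(\hat A,\hat B,\hat C,\hat D)$, with the cross term corresponding to the doubly hyperdominant multiplier $H$ absorbed by cyclic monotonicity of the trajectory data.

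First, I would verify directly from \eqref{eq:auxseq} and the iteration \eqref{eq:LTIAlg} that the matrices $(\hat A,\hat B,\hat C,\hat D)$ realize the updates $\eta_{k+1} = (\hat A \otimes I_d)\eta_k + (\hat B \otimes I_d) u_k$ and $z_k = (\hat C \otimes I_d)\eta_k + (\hat D \otimes I_d) u_k$; this is bookkeeping on the sliding window structure of $\eta_k$ and $z_k$. Left- and right-multiplying the LMI (tensored with $I_d$) by $[\eta_k^\top,\ u_k^\top]^\top$ then yields
\[
\rho_*^2\, \eta_k^\top (P \otimes I_d) \eta_k - \eta_{k+1}^\top (P \otimes I_d) \eta_{k+1} - z_k^\top \left( \bsmtx 0 & H \\ H^\top & 0 \esmtx \otimes I_d \right) z_k \ge 0.
\]
Defining $V(\eta) := \eta^\top (P \otimes I_d) \eta$, this reads $V(\eta_{k+1}) \le \rho_*^2\, V(\eta_k) - z_k^\top (\bsmtx 0 & H \\ H^\top & 0 \esmtx \otimes I_d) z_k$, which is the desired contraction once the cross term is shown to be nonnegative.

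The main step --- and what I expect to be the main obstacle --- is establishing nonnegativity of the cross term along every trajectory. For any $f \in S_{m,L}$ with minimizer at the origin, the windowed data $\{(y_i,u_i)\}_{i=k-T}^{k-1} \cup \{(0,0)\}$ is $S_{m,L}$ interpolable, so Lemma~\ref{lem:SmLinterp} gives cyclic monotonicity of $\{(L y_i - u_i,\, u_i - m y_i)\}_{i=k-T}^{k-1} \cup \{(0,0)\}$. Taking conic combinations of the cyclic monotonicity inequalities on this augmented index set and applying Lemma~\ref{lem:dhd} produces precisely the quadratic forms with a doubly hyperdominant multiplier: the inclusion of the extra origin data point is what relaxes the multiplier from $\Hy_{T+1}^0$ on the augmented set to $\Hy_T$ (with possibly nonzero row/column excess) after the row and column corresponding to the origin are removed, by the same padding step used at the end of the proof of Theorem~\ref{thm:wcb}. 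This delivers $z_k^\top (\bsmtx 0 & H \\ H^\top & 0 \esmtx \otimes I_d) z_k \ge 0$ and hence $V(\eta_{k+1}) \le \rho_*^2 V(\eta_k)$.

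Iterating gives $V(\eta_k) \le \rho_*^{2k} V(\eta_0)$, and $P \succ 0$ then yields $\|\eta_k\| \le \kappa \rho_*^k \|\eta_0\|$ with $\kappa := \sqrt{\lambda_{\max}(P)/\lambda_{\min}(P)}$. To convert this into a bound on $\|y_k\|$, I would use that $x_k$ is the leading block of $\eta_{k+T}$, so $\|y_k\| = \|(C \otimes I_d) x_k\| \le \|C\|\, \kappa\, \rho_*^{k+T} \|\eta_0\|$, and absorb $\|C\|\,\kappa\,\rho_*^T$ into a single constant $c$. The remaining subtlety is initialization: the definition of $\eta_k$ is natural only for $k \ge T$, so for $k < T$ I would either extend the sequence by appending the fixed point $(0,0)$ for negative indices (so that cyclic monotonicity applies to the first partial windows without change) or bound those finitely many $\|y_k\|$ separately via Lipschitz continuity of $\nabla f$ and fold them into $c$.
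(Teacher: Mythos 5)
Your proposal is correct and takes essentially the same route as the paper's (sketched) proof: the LMI is read as a dissipation inequality for $V(\eta)=\eta^\top (P\otimes I_d)\eta$, the multiplier term with $H\in\Hy_T$ is nonnegative along trajectories by the interpolation/cyclic-monotonicity argument (padding $H$ to a zero-excess matrix and including the origin data point), and iterating $V(\eta_{k+1})\le\rho_*^2 V(\eta_k)$ yields the geometric bound on $\|\eta_k\|$ and hence on $\|y_k\|$. You in fact supply details the paper leaves implicit (the realization bookkeeping, the padding step, the shift when extracting $y_k$ from the windowed state, and initialization), but the underlying approach is the same.
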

\begin{proof}
  Similar asymptotic performance bounds and detailed proofs are found
  in \cite{lessard2016analysis} and \cite{taylor2018lyapunov}. A
  sketch is given here.  Let $(\rho_*,P_*,H_*)$ be optimal for the
  minimization in the theorem statement.  Define
  $V(\eta):=\eta^\top P \eta$.  As in the finite-step results, the
  interpolability conditions imply:
  \begin{align*}
    \bmtx \eta_k \\ u_k \emtx^\top \bmtx \hat{C}^\top \\ \hat{D}^\top \emtx
    \bmtx 0 & H \\ H^\top & 0 \emtx 
            \bmtx \hat{C}^\top \\ \hat{D}^\top \emtx^\top \bmtx \eta_k \\ u_k \emtx \succeq 0 
  \end{align*}
  Thus the matrix inequality in the minimization implies that
  $V(\eta_{k+1}) \le \rho^2 V(\eta_k)$ for each $k$.  Iterating this
  inequality yields $\|\eta_k\| \leq c \rho_*^k \|\eta_0\|$ by setting
  $c = \sqrt{\frac{\lambda_{max}(P)}{\lambda_{min}(P)}}$. Then
  $\|y_k\| \leq \|C\| c \rho_*^k \|\eta_0\|$, where $\|C\|$ is the
  induced two norm of $\|C\|$.
\end{proof}

It should be noted that unlike the finite horizon performance bound,
the asymptotic bound is not guaranteed to be tight. In particular,
$\rho_*$ only serves as an upper bound, in general, for the asymptotic
convergence rate. As such, there are numerous variations of the
bounding approach in Theorem \ref{thm:asymptotic} which supply
different upper bounds on the asymptotic convergence, e.g.
\cite{lessard2016analysis} and \cite{taylor2018lyapunov}.

Figure~\ref{fig:hb} also shows the asymptotic rate $\rho_*$ for
several different values of $T$.  The constant $c$ for the asymptotic
curves is chosen so that each curve aligns with the finite-step bound
at $N=4$.  This allows for easier comparison. One notable aspect of
these plots is that the asymptotic rate with $T=3$ agrees, within
numerical tolerances, to the finite horizon results. The finite
horizon results are exact and hence this raises interesting
conjectures regarding the exactness of the asymptotic bounds with $T$
sufficiently large.  We also note that the results obtained using the
optimization in Theorem~\ref{thm:asymptotic} with $T=3$ are strictly
tighter than the bounds provided in \cite{lessard2016analysis} or
\cite{taylor2018lyapunov}. The examples on Github provide this
comparison and a further comparison of various conditions to
compute asymptotic rates will be explored in future work.

\section{Conclusion}
Our contribution is to provide a novel means of solving the finite
step worst case performance problem. The solution relies upon
necessary and sufficient conditions for a set of data including
points and gradient evaluations to be interpolable by the gradient
of a strongly
convex function with Lipschitz continuous gradients. Despite factorial
growth in the number of interpolation constraints with the size of the
data set, we demonstrate that the numerical solutions to the performance
bounding problem may be found from solutions to optimization problems
whose constraints grow only quadratically with the time horizon.

The motivation for solving the problem in this manner
is that a large class of algorithms do not rely upon function evaluations,
so introducing them into the constraints is unnecessary. 

It was also seen that the interpolation conditions derived 
can be extended to the case of asymptotic algorithm analysis
by straightforward application of the framework from \cite{lessard2016analysis}.
Using the asymptotic bound found by this procedure, we illustrate a
connection between the finite step and asymptotic performance bounds in Section
\ref{s:numerical}. The numerical results suggest
that the solution to the finite step performance bounding problem
may provide insight into the problem
of bounding the asymptotic convergence rate.
Further exploration of the relationship
between the problems is left as future work.
\printbibliography
\end{document}

